\documentclass{llncs}

\usepackage[english]{babel}

\usepackage{hyperref}

\usepackage[T1]{fontenc}

\usepackage{float}

\usepackage{bbm}
\usepackage{amssymb, amsmath}
\usepackage{scalerel}

\DeclareMathOperator*{\bigboxplus}{\scalerel*{\boxplus}{\sum}}

\usepackage{mathtools}

\usepackage[margin=2cm]{geometry}

\newcommand{\dollarfmap}{%
  \mathbin{{<}\mspace{-4mu}{\$}\mspace{-4mu}{>}}%
}
\newcommand{\bind}{%
  \mathbin{{>}\mspace{-5mu}{>}\mspace{-4mu}{=}}%
}

\newcommand{\kleisli}{%
  \mathbin{{>}\mspace{-4mu}{=}\mspace{-3mu}{>}}%
}

\usepackage{tikz}
% \tikzstyle{every picture}=
%   [ >=stealth', shorten >=1pt, node distance=1.44cm, initial text=, every state/.style={inner sep=0.75mm, minimum size=1mm}, font=\scriptsize]
%   \usepackage{tikz}
  \usetikzlibrary{automata,automata, positioning, arrows.meta}
  \usetikzlibrary{arrows}

  \usetikzlibrary{shapes,snakes}

    \usetikzlibrary{decorations.pathmorphing}
  \usetikzlibrary{fit}
  \usetikzlibrary{trees}
   \usepackage{soul}
  \usetikzlibrary{calc}
  \tikzstyle{every picture}=[
  >=stealth', shorten >=1pt, node distance=1.44cm,auto,bend angle=45,initial text=,
  every state/.style={inner sep=0.75mm, minimum size=1mm},font=\scriptsize,
  triangle/.style = { regular polygon, regular polygon sides=3, inner sep=0}
]

% If you use the hyperref package, please uncomment the following line
% to display URLs in blue roman font according to Springer's eBook style:

% \newtheorem{example}{Example}
% \newtheorem{definition}{Definition}
% \newtheorem{proposition}{Proposition}
% \newtheorem{theorem}{Theorem}
% \newtheorem{lemma}{Lemma}
% \newtheorem{corollary}{Corollary}

% \bibliographystyle{splncs_srt}

\begin{document}

\setlength{\abovedisplayskip}{0pt}
\setlength{\belowdisplayskip}{0pt}
\setlength{\abovedisplayshortskip}{0pt}
\setlength{\belowdisplayshortskip}{0pt}

\title{Monadic Expressions and their Derivatives}

\author{
  Samira Attou\inst{1}
  \and
  Ludovic Mignot\inst{2}
  \and
  Clément Miklarz\inst{2}
  \and
  Florent Nicart\inst{2}
}

\institute{
  Université Gustave Eiffel,\\
  5 Boulevard Descartes --- Champs s/ Marne\\
  77454 Marne-la-Vallée Cedex 2\\
  \and
  GR\textsuperscript{2}IF,\\
  Université de Rouen Normandie,\\
  Avenue de l'Université,\\
  76801 Saint-Étienne-du-Rouvray, France \\
  \email{samira.attou@univ-eiffel.fr,\\
    \{ludovic.mignot,clement.miklarz1, florent.nicart\}@univ-rouen.fr}
}

\maketitle

\allowdisplaybreaks

\begin{abstract}
  We propose another interpretation of well-known derivatives computations from regular expressions, due to Brzozowski, Antimirov or Lombardy and Sakarovitch,
  in order to abstract the underlying data structures (\emph{e.g.} sets or linear combinations) using the notion of monad.
  As an example of this generalization advantage, we first introduce a new derivation technique based on the graded module monad
  and then show an application of this technique to generalize the parsing of expression with capture groups and back references.

  We also extend operators defining expressions to any \(n\)-ary functions over value sets, such as
  classical operations (like negation or intersection for Boolean weights) or more exotic ones (like algebraic mean for rational weights).

  Moreover, we present how to compute a (non-necessarily finite) automaton from such an extended expression, using the Colcombet and Petrisan categorical definition of automata.
  These category theory concepts allow us to perform this construction in a unified way, whatever the underlying monad.

  Finally, to illustrate our work, we present a Haskell implementation of these notions using advanced techniques of functional programming,
  and we provide a web interface to manipulate concrete examples.
\end{abstract}

% \keywords{Regular Expressions \and Derivatives \and Category Theory \and Haskell}

\section{Introduction}

This paper is an extended version of~\cite{AMMN22}.

Regular expressions are a classical way to represent associations between words and value sets.
As an example, classical regular expressions denote sets of words and regular expressions with multiplicities denote formal series.
From a regular expression, solving the membership test (determining whether a word belongs to the denoted language) or the weighting test (determining the weight of a word in the denoted formal series) can be solved, following Kleene theorems~\cite{Kle56,Sch61} by computing a finite automaton, such as the position automaton~\cite{Glu61,BS86,CF11,CLOZ04}.

Another family of methods to solve these tests is the family of derivative computations, that does not require the construction of a whole automaton.
The common point of these techniques is to transform the test for an arbitrary word into the test for the empty word, which can be easily solved in a purely syntactical way (\emph{i.e.} by induction over the structure of expressions).
Brzozowski~\cite{Brzo64} shows how to compute, from a regular expression \(E\) and a word \(w\), a regular expression \(d_w(E)\) denoting the set of words \(w'\) such that \(ww'\) belongs to the language denoted by \(E\).
Solving the membership test hence becomes the membership test for the empty word in the expression \(d_w(E)\).
Antimirov~\cite{Ant96} modifies this method in order to produce sets of expressions instead of expressions, \emph{i.e.} defines the partial derivatives \(\partial_w(E)\) as a set of expressions the sum of which denotes the same language as \(d_w(E)\).
If the number of derivatives is exponential w.r.t.\ the length \(|E|\) of \(E\) in the worst case\footnote{as far as rules of associativity, commutativity and idempotence of the sum are considered, possibly infinite otherwise.}, the partial derivatives produce at most a linear number of expressions w.r.t. \(|E|\).
Lombardy and Sakarovitch~\cite{LS05} extends these methods to expressions with multiplicities.
Finally, Sulzmann and Lu~\cite{SL14} apply these derivation techniques to parse POSIX expressions.

It is well-known that these methods are based on a common operation, the quotient of languages.
Furthermore, Antimirov's method can be interpreted as the derivation of regular expression with multiplicities in the Boolean semiring.
However, the Brzozowski computation does not produce the same expressions (\emph{i.e.} equality over the syntax trees) as the Antimirov one.

\textbf{Main contributions:}
In this paper, we present a unification of these computations by applying notions of category theory to the category of sets,
and show how to compute categorical automata as defined in~\cite{CP17}, by reinterpreting the work started in~\cite{LM20}.
We make use of classical monads to model well-known derivatives computations.
Furthermore, we deal with \emph{extended} expressions in a general way: in this paper, expressions can support extended operators like complement, intersection, but also any \(n\)-ary function (algebraic mean, extrema multiplications, \emph{etc}.).
The main difference with~\cite{LM20} is that we formally state the languages and series that the expressions denote in an inherent way w.r.t.\ the underlying monads.

More precisely, this paper presents:
\begin{itemize}
    \item an extension of expressions to any \(n\)-ary function over the value set,
    \item a monadic generalization of expressions,
    \item a solution for the membership/weight test for these expressions,
    \item a computation of categorical derivative automata,
    \item a new monad that fits with the extension to \(n\)-ary functions,
    \item an illustration implemented in Haskell using advanced functional programming,
    \item an extension to capture groups and back references expressions.
\end{itemize}

\textbf{Motivation:}
The unification of derivation techniques is a goal by itself.
Moreover, the formal tools used to achieve this unification are also useful:
Monads offer both theoretical and practical advantages.
Indeed, from a theoretical point of view, these structures allow the abstraction of properties
and focus on the principal mechanisms that allow solving the membership and weight problems.
Besides, the introduction of exotic monads can also facilitate the study of finiteness of derivated terms.
From a practical point of view, monads are easy to implement (even in some other languages than Haskell)
and allow us to produce compact and safe code. Finally, we can easily combine different algebraic structures or add some technical functionalities
(capture groups, logging, nondeterminism, \emph{etc.}) thanks to notions like monad transformers~\cite{MPJ95}
that we consider in this paper.

This paper is structured as follows.
In Section~\ref{sec prelim}, we gather some preliminary material, like algebraic structures or category theory notions.
We also introduce some functions well-known to the Haskell community that can allow us to reduce the size of our equations.
We then structurally define the expressions we deal with, the associated series and the weight test for the empty word in Section~\ref{sec expr def}.
In order to extend this test to any arbitrary word, we first state in Section~\ref{sec supp} some properties required by the monads we consider.
Once this so-called support is determined, we show in Section~\ref{sec deriv} how to compute the derivatives.
The computation of derivative automata is explained in Section~\ref{sec aut cons}.
A new monad and its associated derivatives computation is given in Section~\ref{sec new mon}.
An implementation is presented in Section~\ref{sec haskell}.
Finally, we show how to (alternatively to~\cite{SL14}) compute derivatives of capture group expressions in
Section~\ref{sec capt group} and show that as far as the same operators are concerned, the derivative formulae
are the same whatever the underlying monad is.

\section{Preliminaries}\label{sec prelim}

We denote by \(S \rightarrow S'\) the set of functions from a set \(S\) to a set \(S'\).
The notation \(\lambda x \rightarrow f(x) \) is an equivalent notation for a function \(f\).
% As an example, the function \(\lambda x \rightarrow x + 1 \) is equivalently the successor function.

A \emph{monoid} is a set \(S\) endowed with an associative operation and a unit element.
A \emph{semiring} is a structure \((S, \times, +, 1, 0)\) such that \((S, \times, 1)\) is a monoid, \((S, +, 0)\) is a commutative monoid, \(\times \) distributes over \(+\) and \(0\) is an annihilator for \(\times \).
A \emph{starred semiring} is a semiring with a unary function \({}^\star \) such that
\begin{equation*}
    k^\star = 1 + k\times k^\star = 1 + k^\star\times k.
\end{equation*}
%

% A \emph{left-action} (resp. \emph{right-action}) of a set \(S\) over a set \(S'\) is a function from \(S\times S'\) (resp. \(S'\times S\)) to \(S'\).
% Other structures can act in a more structured way.
%

A \(\mathbb{K}\)-\emph{series} over the free monoid \((\Sigma^*, \cdot, \varepsilon)\) associated with an alphabet \(\Sigma \), for a semiring \(\mathbb{K}=(K, \times, +, 1, 0)\), is a function from \(\Sigma^*\) to \(K\).
The set of \(\mathbb{K}\)-\emph{series} can be endowed with the structure of semiring as follows:
\begin{align*}
    1(w)                               & =
    \begin{cases}
        1 & \text{if } w = \varepsilon, \\
        0 & \text{otherwise},
    \end{cases} &
    0(w)                               & = 0,                                          \\
    (S_1 + S_2)(w)                     & = S_1(w) + S_2(w),                          &
    (S_1 \times S_2)(w)                & = \sum_{u\cdot v = w} S_1(u) \times S_2(v).
\end{align*}
Furthermore, if \(S_1(\varepsilon) = 0\) (\emph{i.e.} \(S_1\) is said to be \emph{proper}), the \emph{star} of \(S_1\) is the series defined by
\begin{align*}
    {(S_1)}^\star(\varepsilon) & = 1,                                                                                                   &
    {(S_1)}^\star(w)           & = \sum_{n \leq |w|, w = u_1 \cdots u_n, u_j \neq \varepsilon } S_1(u_1) \times \cdots \times S_1(u_n).
\end{align*}
Finally, for any function \(f\) in \(K^n \rightarrow K\), we set:
\begin{equation}
    (f(S_1, \ldots, S_n))(w) = f(S_1(w), \ldots, S_n(w)). \label{eq f series}
\end{equation}

A \emph{functor}\footnote{More precisely, a functor over a subcategory of the category of sets.} \(F\) associates
with each set \(S\) a set \(F(S)\)
and with each function \(f\) in \(S \rightarrow S'\) a function \(F(f)\) from \(F(S)\) to \(F(S')\)
% \begin{itemize}
%     \item with each set \(S\) a set \(F(S)\),
%     \item with each function \(f\) in \(S \rightarrow S'\) a function \(F(f)\) from \(F(S)\) to \(F(S')\)
% \end{itemize}
such that
\begin{align*}
    F(\mathrm{id}) & = \mathrm{id}, & F(f \circ g) & = F(f) \circ F(g),
\end{align*}
where \(\mathrm{id}\) is the identity function and \(\circ \) the classical function composition.

\noindent A \emph{monad}\footnote{More precisely, a monad over a subcategory of the category of sets.} \(M\) is a functor endowed with two (families of) functions
\begin{itemize}
    \item \(\mathtt{pure}\), from a set \(S\) to \(M(S)\),
    \item \(\mathtt{bind}\), sending any function \(f\) in \(S \rightarrow M(S')\) to \(M(S) \rightarrow M(S')\),
\end{itemize}
such that the three following conditions are satisfied:
\begin{gather*}
    \begin{aligned}
        \mathtt{bind}(f)(\mathtt{pure}(s)) & = f (s),       &
        \mathtt{bind}(\mathtt{pure})       & = \mathrm{id},
    \end{aligned}\\
    \mathtt{bind}(g)(\mathtt{bind}(f)(m)) = \mathtt{bind}(\lambda x \rightarrow \mathtt{bind}(g) (f(x)))(m).
\end{gather*}
% (\textbf{à reprendre}) These laws can be simplified considering the \emph{Kleisli composition} \(\kleisli \) associated with the monad \(M\), operations combining a function \(f\) in \(S\rightarrow M(S')\) with a function \(g\) in \(S'\rightarrow M(S'')\) to produce a function in \(S\rightarrow M(S'')\) as follows:
% \begin{equation*}
%     (f \kleisli g)(s) = \mathtt{bind}(g)(f(s)).
% \end{equation*}
% The previous monad laws hence become
% \begin{align*}
%     f \kleisli \mathtt{pure}  & = \mathtt{pure} \kleisli f = f, &
%     (f \kleisli g) \kleisli h & = f \kleisli (g \kleisli h).
% \end{align*}
% From these equations, for any monad \(M\), \((S \rightarrow M(S), \kleisli, \mathtt{pure})\) is a monoid.

\begin{example}
    The \(\mathtt{Maybe}\) monad associates:
    \begin{itemize}
        \item any set \(S\) with the set \(\mathtt{Maybe}(S) = \{\mathtt{Just}(s) \mid s \in S\} \cup \{\mathtt{Nothing}\} \), where \(\mathtt{Just}\) and \(\mathtt{Nothing}\) are two syntactic tokens allowing us to extend a set with one value;
        \item any function \(f\) with the function \(\mathtt{Maybe}(f)\) defined by
              \begin{align*}
                  \mathtt{Maybe}(f)(\mathtt{Just}(s)) & = \mathtt{Just} (f(s)), &
                  \mathtt{Maybe}(f)(\mathtt{Nothing}) & = \mathtt{Nothing}
              \end{align*}
        \item is endowed with the functions \(\mathtt{pure}\) and \(\mathtt{bind}\) defined by:
              \begin{gather*}
                  \begin{aligned}
                      \mathtt{pure}(s) & = \mathtt{Just}(s),
                  \end{aligned}
                  \qquad\qquad
                  \begin{aligned}
                      \mathtt{bind}(f)(\mathtt{Just}(s)) & = f(s),             \\
                      \mathtt{bind}(f)(\mathtt{Nothing}) & = \mathtt{Nothing}.
                  \end{aligned}
              \end{gather*}
    \end{itemize}
\end{example}

\begin{example}
    The \(\mathtt{Set}\) monad associates:
    \begin{itemize}
        \item with any set \(S\) the set \(2^S\),
        \item with any function \(f\) the function \(\mathtt{Set}(f)\) defined by \( \mathtt{Set}(f)(R) = \bigcup_{r\in R} \{f(r)\}, \)
        \item is endowed with the functions \(\mathtt{pure}\) and \(\mathtt{bind}\) defined by:
              \begin{align*}
                  \mathtt{pure}(s)    & = \{s\},                &
                  \mathtt{bind}(f)(R) & = \bigcup_{r\in R}f(r).
              \end{align*}
    \end{itemize}
\end{example}

\begin{example}
    The \(\mathtt{LinComb}(\mathbb{K})\) monad, for \(\mathbb{K}=(K, \times, +, 1, 0)\), associates:
    \begin{itemize}
        \item with any set \(S\) the set of \(\mathbb{K}\)-linear combinations of elements of \(S\), where a linear combination is a finite (formal, commutative) sum of couples (denoted by \(\boxplus \)) in \(K\times S\) where \((k, s) \boxplus (k', s) = (k + k', s)\),
        \item with any function \(f\) the function \(\mathtt{LinComb}(\mathbb{K})(f)\) defined by
              \begin{equation*}
                  \mathtt{LinComb}(\mathbb{K})(f)(R) = \bigboxplus_{(k, r)\in R} (k, f(r)),
              \end{equation*}
        \item is endowed with the functions \(\mathtt{pure}\) and \(\mathtt{bind}\) defined by:
              \begin{align*}
                  \mathtt{pure}(s)    & = (1, s),                                   &
                  \mathtt{bind}(f)(R) & = \bigboxplus_{(k, r)\in R} k \otimes f(r),
              \end{align*}
              where \(\displaystyle k \otimes R = \bigboxplus_{(k', r) \in R}(k \times k', r)\).
    \end{itemize}
\end{example}
To compact equations, we use the following operators for any monad \(M\):
\begin{align*}
    f \dollarfmap s & = M(f)(s),             &
    m \bind f       & = \mathtt{bind}(f)(m).
    % m \leftap m'    & = m \bind (\lambda x \rightarrow m'), &
    % m \rightap m'   & = m' \bind (\lambda x \rightarrow m).
\end{align*}
If \(\dollarfmap \) can be used to lift unary functions to the monadic level, \(\bind \) and \(\mathtt{pure}\) can be used to lift any \(n\)-ary function \(f\) in \(S_1\times \cdots \times S_n \rightarrow S\), defining a function \(\mathtt{lift}_n\) sending \(S_1\times \cdots \times S_n \rightarrow S\) to \(M(S_1)\times \cdots \times M(S_n) \rightarrow M(S)\) as follows:
\begin{align*}
    \mathtt{lift}_n(f)(m_1, \ldots, m_n) = &
    m_1 \bind (\lambda s_1 \rightarrow           \ldots                                                                           \\
    %    & \qquad m_2 \bind (\lambda s_2 \rightarrow   \ldots                                           \\
                                           & \qquad m_n \bind (\lambda s_n \rightarrow \mathtt{pure}(f(s_1, \ldots, s_n)))\ldots)
\end{align*}

Let us consider the set \(\mathbbm{1}=\{\top \} \) with only one element.
The images of this set by some previously defined monads can be evaluated as value sets classically used to weight words in association with classical regular expressions.
As an example, \(\mathtt{Maybe}(\mathbbm{1})\) and \(\mathtt{Set}(\mathbbm{1})\) are isomorphic to the Boolean set, and any set \(\mathtt{LinComb}(\mathbb{K})(\mathbbm{1})\) can be converted into the underlying set of \(\mathbb{K}\).
This property allows us to extend in a coherent way classical expressions to monadic expressions, where the type of the weights is therefore given by the ambient monad.

\section{Monadic Expressions}\label{sec expr def}

As seen in the previous section, elements in \(M(\mathbbm{1})\) can be evaluated as classical value sets for some particular monads.
Hence, we use these elements not only for the weights associated with words by expressions, but also for the elements that act over the denoted series.

In the following, in addition to classical operators (\(+\), \(\cdot \) and \({}^*\)), we denote:
\begin{itemize}
    \item the action of an element over a series by \(\odot \),
    \item the application of a function by itself.
\end{itemize}
\begin{definition}
    Let \(M\) be a monad.
    An \(M\)-\emph{monadic expression} \(E\) over an alphabet \( \Sigma \) is inductively defined as follows:
    \begin{align*}
        E & = a,                & E & = \varepsilon,      & E & = \emptyset,                      \\
        E & = E_1 + E_2,        & E & = E_1 \cdot E_2,    & E & = E_1^*,                          \\
        E & = \alpha \odot E_1, & E & = E_1 \odot \alpha, & E & = f\left(E_1, \ldots, E_n\right),
    \end{align*}
    where \(a\) is a symbol in \(\Sigma \), \((E_1, \ldots, E_n)\) are \(n\) \(M\)-monadic expressions over \( \Sigma \), \( \alpha \) is an element of \(M(\mathbbm{1})\) and \(f\) is a function from \({(M(\mathbbm{1}))}^n\) to \(M(\mathbbm{1})\).
\end{definition}
We denote by \(\mathrm{Exp}(\Sigma)\) the set of monadic expressions over an alphabet \( \Sigma \).

\begin{example}\label{ex:def extdist}
    As an example of functions that can be used in our extension of classical operators,
    one can define the function \(\mathtt{ExtDist}(x_1, x_2, x_3) = \max(x_1, x_2, x_3) - \min(x_1, x_2, x_3)\) from \(\mathbb{N}^3\) to \(\mathbb{N}\).
\end{example}

Similarly to classical regular expressions, monadic expressions associate a weight with any word.
Such a relation can be denoted \emph{via} a formal series.
However, before defining this notion, in order to simplify our study, we choose to only consider proper expressions.
Let us first show how to characterize them by the computation of a nullability value.
\begin{definition}\label{def Null}
    Let \(M\) be a monad such that the structure \((M(\mathbbm{1}), +, \times, {}^\star, 1, 0)\) is a starred semiring.
    The \emph{nullability value} of an \(M\)-monadic expression \(E\) over an alphabet \( \Sigma \) is the element \(\mathtt{Null}(E)\) of \( M(\mathbbm{1}) \) inductively defined as follows:
    \begin{gather*}
        \begin{aligned}
            \mathtt{Null}(\varepsilon)      & = 1,                                            &
            \mathtt{Null}(\emptyset)        & = 0,                                              \\
            \mathtt{Null}(a)                & = 0,                                            &
            \mathtt{Null}(E_1 + E_2)        & = \mathtt{Null}(E_1) + \mathtt{Null}(E_2),        \\
            \mathtt{Null}(E_1 \cdot E_2)    & = \mathtt{Null}(E_1) \times \mathtt{Null}(E_2), &
            \mathtt{Null}(E_1^*)            & = {\mathtt{Null}(E_1)}^\star,                     \\
            \mathtt{Null}(\alpha \odot E_1) & = \alpha \times \mathtt{Null}(E_1),             &
            \mathtt{Null}(E_1 \odot \alpha) & = \mathtt{Null}(E_1) \times \alpha,
        \end{aligned}\\
        \mathtt{Null}(f(E_1, \ldots, E_n)) = f(\mathtt{Null}(E_1), \ldots, \mathtt{Null}(E_n)),
    \end{gather*}
    where \(a\) is a symbol in \(\Sigma \), \((E_1, \ldots, E_n)\) are \(n \) \(M\)-monadic expressions over \( \Sigma \), \( \alpha \) is an element of \(M(\mathbbm{1})\) and \(f\) is a function from \({(M(\mathbbm{1}))}^n\) to \(M(\mathbbm{1})\).
\end{definition}
When the considered semiring is not a starred one, we restrict the nullability value computation to expressions where a starred subexpression admits a \emph{null nullability value}.
In order to compute it, let us consider the Maybe monad, allowing us to elegantly deal with such a partial function.
\begin{definition}\label{def partial nullability}
    Let \(M\) be a monad such that the structure \((M(\mathbbm{1}), +, \times, 1, 0)\) is a semiring.
    The \emph{partial nullability value} of an \(M\)-monadic expression \(E\) over an alphabet \( \Sigma \) is the element \(\mathtt{PartNull}(E)\) of \( \mathtt{Maybe}(M(\mathbbm{1})) \) defined as follows:
    \begin{gather*}
        \begin{aligned}
            \mathtt{PartNull}(\varepsilon) & = \mathtt{Just}(1), &
            \mathtt{PartNull}(\emptyset)   & = \mathtt{Just}(0), &
            \mathtt{PartNull}(a)           & = \mathtt{Just}(0),
        \end{aligned}\\
        \begin{aligned}
            \mathtt{PartNull}(E_1 + E_2)           & = \mathtt{lift}_2(+) (\mathtt{PartNull}(E_1), \mathtt{PartNull}(E_2)),        \\
            \mathtt{PartNull}(E_1 \cdot E_2)       & = \mathtt{lift}_2(\times )(\mathtt{PartNull}(E_1), \mathtt{PartNull}(E_2)),   \\
            \mathtt{PartNull}(E_1^*)               & =
            \begin{cases}
                \mathtt{Just}(1) & \text{if } \mathtt{PartNull}(E_1) = \mathtt{Just}(0), \\
                \mathtt{Nothing} & \text{otherwise,}
            \end{cases}                                               \\
            \mathtt{PartNull}(\alpha \odot E_1)    & = (\lambda E \rightarrow \alpha \times E) \dollarfmap \mathtt{PartNull}(E_1), \\
            \mathtt{PartNull}(E_1 \odot \alpha)    & = (\lambda E \rightarrow E \times \alpha) \dollarfmap \mathtt{PartNull}(E_1), \\
            \mathtt{PartNull}(f(E_1, \ldots, E_n)) & = \mathtt{lift}_n(f)(\mathtt{PartNull}(E_1), \ldots, \mathtt{PartNull}(E_n)),
        \end{aligned}
    \end{gather*}
    where \(a\) is a symbol in \(\Sigma \), \((E_1, \ldots, E_n)\) are \(n \) \(M\)-monadic expressions over \( \Sigma \), \( \alpha \) is an element of \(M(\mathbbm{1})\) and \(f\) is a function from \({(M(\mathbbm{1}))}^n\) to \(M(\mathbbm{1})\).
\end{definition}
An expression \(E\) is \emph{proper} if its partial nullability value is not \(\mathtt{Nothing}\), therefore if it is a value \(\mathtt{Just}(v)\);
in this case, \(v\) is its nullability value, denoted by \(\mathtt{Null}(E)\) (by abuse).
\begin{definition}\label{def series}
    Let \(M\) be a monad such that the structure \((M(\mathbbm{1}), +, \times, 1, 0)\) is a semiring, and \(E\) be a \(M\)-monadic proper expression over an alphabet \( \Sigma \).
    The \emph{series} \(S(E)\) \emph{associated} with \(E\) is inductively defined as follows:
    \begin{gather*}
        \begin{aligned}
            S(\varepsilon)(w)                  & =
            \begin{cases}
                1 & \text{if } w = \varepsilon, \\
                0 & \text{otherwise},
            \end{cases} &
            S(\emptyset)(w)                    & = 0, &
            S(a)(w)                            & =
            \begin{cases}
                1 & \text{if } w = a, \\
                0 & \text{otherwise},
            \end{cases}
        \end{aligned}\\
    \end{gather*}
    \begin{gather*}
        \begin{aligned}
            S(E_1 + E_2)     & = S(E_1) + S(E_2),      &
            S(E_1 \cdot E_2) & = S(E_1) \times S(E_2), &
            S(E_1^*)         & = {(S(E_1))}^\star,
        \end{aligned}\\
        \begin{aligned}
            S(\alpha \odot E_1)(w) & = \alpha \times S(E_1)(w), &
            S(E_1 \odot \alpha)(w) & = S(E_1)(w) \times \alpha,
        \end{aligned}\\
        S(f(E_1, \ldots, E_n)) = f(S(E_1), \ldots, S(E_n)),
    \end{gather*}
    where \(a\) is a symbol in \(\Sigma \), \((E_1, \ldots, E_n)\) are n \(M\)-monadic expressions over \( \Sigma \), \( \alpha \) is an element of \(M(\mathbbm{1})\) and \(f\) is a function from \({(M(\mathbbm{1}))}^n\) to \(M(\mathbbm{1})\).
\end{definition}
%
% Notice that the series \(S(F)\) associated with a starred subexpression \(F^*\) of a proper expression is always proper whereas it is not necessary the case with proper expression.
%
From now on, %to simplify the notions, 
we consider the set \(\mathrm{Exp}(\Sigma)\) of \(M\)-monadic expressions over \(\Sigma \) to be endowed with the structure of a semiring, and two expressions denoting the same series to be equal.
\noindent The \emph{weight associated with} a word \(w\) in \(\Sigma^*\) \emph{by} \(E\) is the value \(\mathtt{weight}_w(E) = S(E)(w)\).
%
% It can be shown that
The nullability of a proper expression is the weight it associates with \(\varepsilon \),
%by a trivial induction over the structures of expression,
following Definition~\ref{def partial nullability} and Definition~\ref{def series}.
\begin{proposition}
    Let \(M\) be a monad such that the structure \((M(\mathbbm{1}), +, \times, 1, 0)\) is a semiring.
    Let \(E\) be an \(M\)-monadic proper expression over \( \Sigma \).
    Then:
    \begin{equation*}
        \mathtt{Null}(E) = \mathtt{weight}_\varepsilon(E).
    \end{equation*}
\end{proposition}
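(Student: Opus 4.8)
The statement to prove is that $\mathtt{Null}(E) = \mathtt{weight}_\varepsilon(E) = S(E)(\varepsilon)$ for every proper $M$-monadic expression $E$. The natural approach is structural induction on $E$, matching the inductive clauses of Definition~\ref{def partial nullability} (which defines $\mathtt{Null}$ via $\mathtt{PartNull}$) against those of Definition~\ref{def series} (which defines $S(E)$), evaluated at $\varepsilon$.

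Let me sketch how I'd prove this.

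First I would handle the base cases directly: for $E = \varepsilon$, $\mathtt{Null}(\varepsilon) = 1 = S(\varepsilon)(\varepsilon)$; for $E = \emptyset$, both sides are $0$; for $E = a$ with $a \in \Sigma$, $\mathtt{Null}(a) = 0$ and $S(a)(\varepsilon) = 0$ since $\varepsilon \neq a$. These are immediate from the definitions.

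Then I would treat the inductive cases. For $E = E_1 + E_2$: by definition $S(E_1+E_2)(\varepsilon) = S(E_1)(\varepsilon) + S(E_2)(\varepsilon)$, which by the induction hypothesis equals $\mathtt{Null}(E_1) + \mathtt{Null}(E_2) = \mathtt{Null}(E_1+E_2)$ — here I'd note that $\mathtt{lift}_2(+)(\mathtt{Just}(v_1),\mathtt{Just}(v_2)) = \mathtt{Just}(v_1+v_2)$ in the $\mathtt{Maybe}$ monad, so the $\mathtt{PartNull}$ computation does give $\mathtt{Just}(\mathtt{Null}(E_1)+\mathtt{Null}(E_2))$. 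The cases $E_1 \cdot E_2$, $\alpha \odot E_1$, $E_1 \odot \alpha$, and $f(E_1,\ldots,E_n)$ are analogous: in each, $S$ applies the corresponding operation pointwise (for the product, $S(E_1 \cdot E_2)(\varepsilon) = \sum_{u \cdot v = \varepsilon} S(E_1)(u) \times S(E_2)(v) = S(E_1)(\varepsilon) \times S(E_2)(\varepsilon)$ since the only factorization of $\varepsilon$ is $\varepsilon \cdot \varepsilon$), and the $\mathtt{Null}$/$\mathtt{PartNull}$ side applies the same operation to the (recursively correct) nullability values, using the $\mathtt{Maybe}$-monad identities for $\mathtt{lift}_n$ and $\dollarfmap$ together with propriety of the subexpressions. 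The star case $E = E_1^*$ is the most delicate: propriety of $E_1^*$ forces $\mathtt{PartNull}(E_1) = \mathtt{Just}(0)$, i.e. $\mathtt{Null}(E_1) = 0$, hence $S(E_1)$ is proper (its value at $\varepsilon$ is $0$), so by the series definition ${(S(E_1))}^\star(\varepsilon) = 1$; and on the other side $\mathtt{Null}(E_1^*) = 1$ by Definition~\ref{def partial nullability}. This matches.

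The main obstacle — though it is more a matter of bookkeeping than of genuine difficulty — is the careful threading of propriety through the induction: one must verify that every subexpression of a proper expression is proper (so that $\mathtt{Null}$ is defined on it and the induction hypothesis applies), and that the pointwise series operations interact correctly with evaluation at $\varepsilon$, particularly for the star. I would state the propriety-of-subexpressions fact as a preliminary observation (immediate from the definition of $\mathtt{PartNull}$, since $\mathtt{Nothing}$ propagates upward through $\mathtt{lift}_n$ and $\dollarfmap$), and then the rest is a routine case check with no real surprises.
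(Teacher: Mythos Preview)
Your proposal is correct and matches the paper's (implicit) approach: the paper gives no explicit proof, merely remarking that the claim follows from Definition~\ref{def partial nullability} and Definition~\ref{def series}, which is precisely the structural induction you spell out. Your treatment is in fact more detailed than what the paper provides, including the explicit handling of the star case and the propriety-of-subexpressions observation.
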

The previous proposition implies that the weight of the empty word can be syntactically computed (\emph{i.e.} inductively computed from a monadic expression).
Now, let us show how to extend this computation by defining the computation of derivatives for monadic expressions.

\section{Monadic Supports for Expressions}\label{sec supp}

\noindent A \(\mathbb{K}\)\emph{-left-semimodule}, for a semiring \(\mathbb{K}=(K, \times, +, 1, 0)\), is a commutative monoid \((S, \pm, \underline{0})\) endowed with a function \(\triangleright \) from \(K \times S\) to \(S\) such that:
\begin{gather*}
    \begin{aligned}
        (k \times k') \triangleright s & = k \triangleright (k' \triangleright s),     &
        (k + k') \triangleright s      & = k \triangleright s \pm k' \triangleright s,
    \end{aligned}\\
    \begin{aligned}
        k \triangleright (s \pm s') & = k \triangleright s \pm k \triangleright s',     &
        1 \triangleright s          & = s,                                              &
        0 \triangleright s          & = k \triangleright \underline{0} = \underline{0}.
    \end{aligned}\\
\end{gather*}
A \(\mathbb{K}\)\emph{-right-semimodule} can be defined symmetrically.

An \emph{operad}~\cite{LV12,May06} is a structure \((O, {(\circ_{j})}_{j\in\mathbb{N}}, \mathrm{id})\) where
% \begin{itemize}
% \item 
\(O\) is a graded set (\emph{i.e.} \(O = \bigcup_{n\in\mathbb{N}} O_n\)),
% \item 
\(\mathrm{id}\) is an element of \(O_1\),
% \item 
\(\circ_j\) is a function defined for any three integers \( (i, j, k) \)\footnote{every couple \( (i,k) \) unambiguously defines the domain and codomain of a function \( \circ_j \)} with \( 0<j\leq k \) in \( O_k\times O_{i} \rightarrow O_{k+i-1} \)
such that for any elements \( p_1 \in O_m \), \( p_2 \in O_n \), \( p_3 \in O_p \):
\begin{gather*}
    \forall 0 < j \leq m, \mathrm{id} \circ_1 p_1 = p_1 \circ_j \mathrm{id} =p_1,                                 \\
    \forall 0< j \leq m, 0< j' \leq n, p_1 \circ_j (p_2 \circ_{j'} p_3) = (p_1 \circ_j p_2) \circ_{j+{j'}-1} p_3, \\
    \forall 0 < {j'} \leq j \leq m, (p_1 \circ_j p_2) \circ_{j'} p_3 = (p_1 \circ_{j'} p_3) \circ_{j+p-1} p_2.
\end{gather*}
%   \begin{enumerate}
%     \item for any integer \( 0<j \leq m \):
%           \begin{equation*}
%               \mathrm{id} \circ_1 p_1 = p_1 \circ_j \mathrm{id} =p_1,
%           \end{equation*}
%     \item for any two integers \( 0<j\leq m \) and \( 0<j'\leq n \):
%           \begin{equation*}
%               p_1 \circ_j (p_2 \circ_{j'} p_3) = (p_1 \circ_j p_2) \circ_{j+{j'}-1} p_3,
%           \end{equation*}
%     \item for any two integers \( 0 < {j'} \leq j \leq m \):
%           \begin{equation*}
%               (p_1 \circ_j p_2) \circ_{j'} p_3 = (p_1 \circ_{j'} p_3) \circ_{j+p-1} p_2.
%           \end{equation*}
% \end{enumerate}
% \end{itemize}
Combining these compositions \( \circ_j \), one can define a composition \( \circ \) sending \( O_k \times O_{i_1}\times \cdots\times O_{i_k} \) to \( O_{i_1+\cdots+i_k} \):
for any element \( (p, q_{1}, \ldots, q_{k}) \) in \( O_k \times O^k \),
\begin{equation*}
    p \circ (q_{1},\ldots,q_{k})= (\cdots((p \circ_k q_{k})\circ_{k-1} q_{{k-1}}\cdots)\cdots)\circ_1 q_{1}.
\end{equation*}
Conversely, the composition \( \circ \) can define the compositions \( \circ_j \) using the identity element:
for any two elements \( (p,q) \) in \( O_k\times O_i \), for any integer \( 0<j\leq k \):
\begin{equation*}
    p\circ_j q = p\circ (\underbrace{\mathrm{id},\ldots,\mathrm{id}}_{j-1\text{ times}},q,\underbrace{\mathrm{id},\ldots,\mathrm{id}}_{k-j\text{ times}}).
\end{equation*}
As an example, the set of \(n\)-ary functions over a set, with the identity function as unit, forms an operad.

A \emph{module over an operad} \((O, \circ, \mathrm{id})\) is a set \(S\) endowed with a function \(\divideontimes \) from \(O_n \times S^n\) to \(S\) such that
\begin{multline*}
    f \divideontimes (f_1 \divideontimes (s_{1, 1}, \ldots, s_{1, i_1}), \ldots, f_n \divideontimes (s_{n, 1}, \ldots, s_{n, i_n}))\\
    =
    (f \circ (f_1, \ldots, f_n)) \divideontimes (s_{1, 1}, \ldots, s_{1, i_1}, \ldots, s_{n, 1}, \ldots, s_{n, i_n}).
\end{multline*}
% As an example, for any set \(S\) and any operad \(\mathbb{O} = (O, \circ, \mathrm{id})\), the set \(\bigcup_{n\in\mathbb{N}} O_n \times S^n\), endowed with the function defined by
% \begin{multline*}
%     f\divideontimes ((f_1, (s_{1, 1}, \ldots, s_{1, i_1})), \ldots, (f_n, (s_{n, 1}, \ldots, s_{n, i_n}))) \\
%     = (f \circ (f_1, \ldots, f_n), (s_{1, 1}, \ldots, s_{1, i_1}, \ldots, s_{n, 1}, \ldots, s_{n, i_n}))
% \end{multline*}
% is a module over \(\mathbb{O}\)\label{ref graded module over op}.

The extension of the computation of derivatives could be performed for any monad.
Indeed, any monad could be used to define well-typed auxiliary functions that mimic the classical computations.
However, some properties should be satisfied in order to compute weights equivalently to Definition~\ref{def series}.
Therefore, in the following we consider a restricted kind of monads.

A \emph{monadic support} is a structure \((M, +, \times, 1, 0, \pm, \underline{0}, \ltimes, \triangleright, \triangleleft,  \divideontimes)\) satisfying:
\begin{itemize}
    \item \(M\) is a monad,
    \item \(\mathbb{R} = (M(\mathbbm{1}), +, \times, 1, 0)\) is a semiring,
    \item \(\mathbb{M} = (M(\mathrm{Exp}(\Sigma)), \pm, \underline{0}) \) is a monoid,
    \item \((\mathbb{M}, \ltimes)\) is a \(\mathrm{Exp}(\Sigma)\)-right-semimodule,
    \item \((\mathbb{M}, \triangleright)\) is a \(\mathbb{R}\)-left-semimodule,
    \item \((\mathbb{M}, \triangleleft)\) is a \(\mathbb{R}\)-right-semimodule,
    \item \((M(\mathrm{Exp}(\Sigma)), \divideontimes)\) is a module for the operad of the functions over \(M(\mathbbm{1})\).
\end{itemize}
An \emph{expressive support} is a monadic support \((M, +, \times, 1, 0, \pm, \underline{0}, \ltimes, \triangleright,\triangleleft, \divideontimes)\) endowed with a function \(\mathtt{toExp}\) from \(M(\mathrm{Exp}(\Sigma))\) to \(\mathrm{Exp}(\Sigma)\) satisfying the following conditions:
\begin{align}
    \mathtt{weight}_w(\mathtt{toExp}(m))              & = m  \bind \mathtt{weight}_w \label{eq toExp weight}                             \\
    \mathtt{toExp}(m \ltimes F)                       & = \mathtt{toExp}(m) \cdot F,                                                     \\
    \mathtt{toExp}(m \pm m')                          & = \mathtt{toExp}(m) + \mathtt{toExp}(m'),                                        \\
    \mathtt{toExp}(m \triangleright x)                & = \mathtt{toExp}(m) \odot x,                                                     \\
    \mathtt{toExp}(x \triangleleft m)                 & = x \odot \mathtt{toExp}(m),                                                     \\
    \mathtt{toExp}(f\divideontimes(m_1, \ldots, m_n)) & = f(\mathtt{toExp}(m_1), \ldots, \mathtt{toExp}(m_n)). \label{eq toExp function}
\end{align}
Let us now illustrate this notion with three expressive supports that will allow us to model well-known derivatives computations.

\begin{example}[The \(\mathtt{Maybe}\) support]
    \begin{gather*}
        \begin{aligned}
            \mathtt{toExp}(\mathtt{Nothing}) & = 0, &
            \mathtt{toExp}(\mathtt{Just}(E)) & = E,
        \end{aligned}\\
        \begin{aligned}
            \mathtt{Nothing} + m                      & = m ,                  \\
            m + \mathtt{Nothing}                      & = m ,                  \\
            \mathtt{Just}(\top) + \mathtt{Just}(\top) & = \mathtt{Just}(\top),
        \end{aligned}
        \qquad \qquad
        \begin{aligned}
            \mathtt{Nothing} \times   m                      & =  \mathtt{Nothing},    \\
            m \times   \mathtt{Nothing}                      & =  \mathtt{Nothing},    \\
            \mathtt{Just}(\top) \times   \mathtt{Just}(\top) & =  \mathtt{Just}(\top),
        \end{aligned}\\
        \begin{aligned}
            \mathtt{Nothing} \pm m                 & = m,                     &
            m \pm \mathtt{Nothing}                 & = m,                     &
            \mathtt{Just}(E) \pm \mathtt{Just}(E') & = \mathtt{Just}(E + E'),
        \end{aligned}\\
        \begin{aligned}
            1             & = \mathtt{Just}(\top), &
            0             & = \mathtt{Nothing},    &
            \underline{0} & = \mathtt{Nothing},
        \end{aligned}\\
        \begin{aligned}
            m \ltimes F & = (\lambda E \rightarrow E\cdot F) \dollarfmap m,
        \end{aligned}\\
        \begin{aligned}
            m \triangleright m' & = m \bind (\lambda x \rightarrow m'), &
            m \triangleleft m'  & = m' \bind (\lambda x \rightarrow m),
        \end{aligned}\\
        \begin{aligned}
            f \divideontimes (m_1, \ldots, m_n) & = \mathtt{pure}(f(\mathtt{toExp}(m_1), \ldots, \mathtt{toExp}(m_n))).
        \end{aligned}
    \end{gather*}
\end{example}
\begin{example}[The \(\mathtt{Set}\) support]
    \begin{gather*}
        \begin{aligned}
            \mathtt{toExp}(\{E_1, \ldots, E_n\}) & = E_1 + \cdots + E_n,
        \end{aligned}\\
        \begin{aligned}
            +             & = \cup,      &
            \times        & = \cap,      &
            \pm           & = \cup,      &
            1             & = \{\top \}, &
            0             & = \emptyset, &
            \underline{0} & = \emptyset,
        \end{aligned}\\
        \begin{aligned}
            m \ltimes F & = (\lambda E \rightarrow E\cdot F) \dollarfmap m,
        \end{aligned}\\
        \begin{aligned}
            m \triangleright m' & = m \bind (\lambda x \rightarrow m'), &
            m \triangleleft m'  & = m' \bind (\lambda x \rightarrow m),
        \end{aligned}\\
        \begin{aligned}
            f \divideontimes (m_1, \ldots, m_n) & = \mathtt{pure}(f(\mathtt{toExp}(m_1), \ldots, \mathtt{toExp}(m_n))).
        \end{aligned}
    \end{gather*}
\end{example}
\begin{example}[The \(\mathtt{LinComb}(\mathbb{K})\) support]
    \begin{gather*}
        \begin{aligned}
            \mathtt{toExp}((k_1, E_1) \boxplus \cdots \boxplus (k_n, E_n)) & = k_1 \odot E_1 + \cdots + k_n \odot E_n,
        \end{aligned}\\
        \begin{aligned}
            +                           & = \boxplus,            &
            (k, \top) \times (k', \top) & = (k \times k', \top), &
            1                           & = (1, \top),           &
            0                           & = (0, \top),
        \end{aligned}\\
        \begin{aligned}
            \pm           & = \boxplus,  &
            \underline{0} & = (0, \top),
        \end{aligned}\\
        \begin{aligned}
            m \ltimes F & = (\lambda E \rightarrow E\cdot F) \dollarfmap m,
        \end{aligned}\\
        \begin{aligned}
            m \triangleright m' & = m \bind (\lambda x \rightarrow m'),             &
            m \triangleleft k   & = (\lambda E \rightarrow E\odot k) \dollarfmap m,
        \end{aligned}\\
        \begin{aligned}
            f \divideontimes (m_1, \ldots, m_n) & = \mathtt{pure}(f(\mathtt{toExp}(m_1), \ldots, \mathtt{toExp}(m_n))).
        \end{aligned}
    \end{gather*}
\end{example}
% In the \(\mathtt{LinComb}(\mathbb{K})\) support, \(\triangleleft = \rightap \) when \(\mathbb{K}\) is a commutative semiring.

\section{Monadic Derivatives}\label{sec deriv}

In the following, \((M, +, \times, 1, 0, \pm, \underline{0}, \ltimes, \triangleright,\triangleleft,\divideontimes, \mathtt{toExp})\) is an expressive support.

\begin{definition}\label{def der symb}
    The \emph{derivative} of an \(M\)-monadic expression \(E\) over \( \Sigma \) w.r.t.\ a symbol \(a\) in \(\Sigma \) is the element \( d_a(E) \) in \( M (\mathrm{Exp}(\Sigma)) \) inductively defined as follows:
    \begin{gather*}
        \begin{aligned}
            d_a(\varepsilon) & = \underline{0}, &
            d_a(\emptyset)   & = \underline{0}, &
            d_a(b)           & =
            \begin{cases}
                \mathtt{pure}(\varepsilon) & \text{if } a = b, \\
                \underline{0}              & \text{otherwise,}
            \end{cases}
        \end{aligned}\\
        \begin{aligned}
            d_a(E_1 + E_2) & = d_a(E_1) \pm d_a(E_2),  &
            d_a(E_1^*)     & = d_a(E_1) \ltimes E_1^*,
        \end{aligned}\\
        \begin{aligned}
            d_a(E_1 \cdot E_2) & =
            d_a(E_1) \ltimes E_2
            \pm \mathtt{Null}(E_1) \triangleright d_a(E_2),
        \end{aligned}\\
        \begin{aligned}
            d_a(\alpha \odot E_1) & = \alpha \triangleright d_a(E_1), &
            d_a(E_1 \odot \alpha) & = d_a(E_1) \triangleleft \alpha,
        \end{aligned}\\
        d_a(f(E_1, \ldots, E_n)) = f \divideontimes (d_a(E_1), \ldots, d_a(E_n))
    \end{gather*}
    where \(b\) is a symbol in \(\Sigma \), \((E_1, \ldots, E_n)\) are n \(M\)-monadic expressions over \( \Sigma \), \( \alpha \) is an element of \(M(\mathbbm{1})\) and \(f\) is a function from \({(M(\mathbbm{1}))}^n\) to \(M(\mathbbm{1})\).
\end{definition}
The link between derivatives and series can be stated as follows, which is an alternative description of the classical quotient.
\begin{proposition}\label{prop der and quot}
    Let \(E\) be an \(M\)-monadic expression over an alphabet \(\Sigma \), \(a\) be a symbol in \(\Sigma \) and \(w\) be a word in \(\Sigma^*\).
    Then:
    \begin{equation*}
        \mathtt{weight}_{aw} (E) = d_a(E) \bind \mathtt{weight}_w.
    \end{equation*}
\end{proposition}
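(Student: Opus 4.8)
The plan is to prove the identity $\mathtt{weight}_{aw}(E) = d_a(E) \bind \mathtt{weight}_w$ by structural induction on the monadic expression $E$. The base cases ($E = \varepsilon$, $E = \emptyset$, $E = b$) are immediate: one checks that $\mathtt{weight}_{aw}(\varepsilon) = 0 = \underline{0} \bind \mathtt{weight}_w$ (since $aw \neq \varepsilon$), similarly for $\emptyset$, and for $E = b$ one splits on whether $a = b$, using that $\mathtt{pure}(\varepsilon) \bind \mathtt{weight}_w = \mathtt{weight}_w(\varepsilon)$ by the first monad law, which equals $1$ if $w = \varepsilon$ and $0$ otherwise --- exactly $\mathtt{weight}_{aw}(b)$ when $a = b$.

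For the inductive cases I would work through the clauses of Definition \ref{def der symb} one at a time, in each case unfolding the definition of $d_a$, applying the monad laws together with the module/semimodule axioms of the expressive support, and invoking the defining equations \eqref{eq toExp weight}--\eqref{eq toExp function} to connect the monadic operations ($\pm$, $\ltimes$, $\triangleright$, $\triangleleft$, $\divideontimes$) with the corresponding operations on series from Definition \ref{def series}. The key technical lemma, which I would either cite or establish first, is that $\bind$ distributes over these operations in the expected way --- e.g. $(m \pm m') \bind \mathtt{weight}_w = (m \bind \mathtt{weight}_w) + (m' \bind \mathtt{weight}_w)$, and $(m \ltimes F) \bind \mathtt{weight}_w$ relates to $(m \bind \mathtt{weight}_w)$ times a shifted weight of $F$ --- these follow from the expressive-support axioms applied after rewriting $d_a(E) \bind \mathtt{weight}_w$ as $\mathtt{weight}_w(\mathtt{toExp}(d_a(E)))$ via \eqref{eq toExp weight}. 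So for instance in the sum case, $d_a(E_1 + E_2) \bind \mathtt{weight}_w = (d_a(E_1) \pm d_a(E_2)) \bind \mathtt{weight}_w = (d_a(E_1)\bind \mathtt{weight}_w) + (d_a(E_2)\bind\mathtt{weight}_w)$, which by induction equals $\mathtt{weight}_{aw}(E_1) + \mathtt{weight}_{aw}(E_2) = \mathtt{weight}_{aw}(E_1 + E_2)$.

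The concatenation case is the heart of the argument. Here $d_a(E_1 \cdot E_2) = d_a(E_1)\ltimes E_2 \pm \mathtt{Null}(E_1)\triangleright d_a(E_2)$, and after binding with $\mathtt{weight}_w$ one must match this against $S(E_1 \cdot E_2)(aw) = \sum_{u\cdot v = aw} S(E_1)(u)\times S(E_2)(v)$. The sum over factorizations of $aw$ splits into the term $u = \varepsilon$ (contributing $S(E_1)(\varepsilon) \times S(E_2)(aw) = \mathtt{Null}(E_1) \times \mathtt{weight}_{aw}(E_2)$) and the terms with $u = au'$ (contributing $\sum_{u'\cdot v = w} S(E_1)(au')\times S(E_2)(v) = \sum_{u'\cdot v = w}(\mathtt{weight}_{au'}(E_1))\times S(E_2)(v)$, which by the induction hypothesis on $E_1$ is $\sum_{u'\cdot v=w}(d_a(E_1)\bind\mathtt{weight}_{u'})\times S(E_2)(v)$, i.e. the weight assigned to $w$ by $\mathtt{toExp}(d_a(E_1))\cdot E_2 = \mathtt{toExp}(d_a(E_1)\ltimes E_2)$). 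Reconciling these two pieces with the $\pm$/$\triangleright$ structure of $d_a(E_1\cdot E_2)$, via \eqref{eq toExp weight}, the right-semimodule axioms for $\ltimes$, and the left-semimodule axioms for $\triangleright$, is the main obstacle --- the bookkeeping of how $\bind$ interacts with the $\varepsilon$-versus-$au'$ case split and with $\mathtt{Null}(E_1) = \mathtt{weight}_\varepsilon(E_1)$ (Proposition, just above) is where care is needed. The star case $d_a(E_1^*) = d_a(E_1)\ltimes E_1^*$ reduces to the concatenation machinery once one uses the recursion $S(E_1^*) = 1 + S(E_1)\times S(E_1^*)$ on the proper series $S(E_1)$ and the fact that $aw \neq \varepsilon$ kills the unit term. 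The scalar-action cases $\alpha \odot E_1$ and $E_1 \odot \alpha$ follow directly from the left/right $\mathbb{R}$-semimodule axioms together with \eqref{eq toExp weight}, and the $n$-ary function case $d_a(f(E_1,\ldots,E_n)) = f \divideontimes (d_a(E_1),\ldots,d_a(E_n))$ follows from \eqref{eq toExp function}, the operad-module axiom, and equation \eqref{eq f series} defining $f$ on series, again closed by the induction hypotheses on each $E_i$.
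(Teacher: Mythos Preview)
Your proposal is correct and follows essentially the same route as the paper: structural induction on $E$, with the central device being the rewrite $d_a(E)\bind\mathtt{weight}_w = \mathtt{weight}_w(\mathtt{toExp}(d_a(E)))$ via~\eqref{eq toExp weight}, followed by the expressive-support axioms~\eqref{eq toExp weight}--\eqref{eq toExp function} to push $\mathtt{toExp}$ through $\pm$, $\ltimes$, $\triangleright$, $\triangleleft$, $\divideontimes$. The paper's own proof is terser: it defers all classical operators (sum, product, star, scalar actions) to the standard references~\cite{Ant96,Brzo64,LS05} and spells out only the new $n$-ary function case, whose chain of equalities matches exactly what you describe in your final sentence.
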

\begin{proof}
    Let us proceed by induction over the structure of \(E\).
    All the classical cases (\emph{i.e.} the function operator left aside) can be proved following the classical methods (\cite{Ant96,Brzo64,LS05}).
    Therefore, let us consider this last case.
    \begin{align*}
         & d_a(f(E_1, \ldots, E_n)) \bind \mathtt{weight}_w                                                                                                                             \\
         & \qquad = \mathtt{weight}_w (\mathtt{toExp} (d_a(f(E_1, \ldots, E_n))))                                        & (\text{Eq}~\eqref{eq toExp weight})                          \\
         & \qquad = \mathtt{weight}_w (\mathtt{toExp} (f \divideontimes (d_a(E_1), \ldots, d_a(E_n)))                    & (\text{Def}~\ref{def der symb}))                             \\
         & \qquad = \mathtt{weight}_w (f (\mathtt{toExp}(d_a(E_1)), \ldots, \mathtt{toExp}(d_a(E_n))))                   & (\text{Eq}~\eqref{eq toExp function})                        \\
         & \qquad = f (\mathtt{weight}_w(\mathtt{toExp}(d_a(E_1))), \ldots, \mathtt{weight}_w(\mathtt{toExp}(d_a(E_n)))) & (\text{Def}~\ref{def series}, \text{Eq}~\eqref{eq f series}) \\
         & \qquad = f (d_a(E_1) \bind \mathtt{weight}_w, \ldots, d_a(E_n) \bind \mathtt{weight}_w)                       & (\text{Eq}~\eqref{eq toExp weight})                          \\
         & \qquad = f (\mathtt{weight}_{aw} (E_1), \ldots, \mathtt{weight}_{aw} (E_n))                                   & (\text{Ind.\ hyp.})                                          \\
         & \qquad = \mathtt{weight}_{aw} ( f(E_1, \ldots, E_n))                                                          & (\text{Def}~\ref{def series}, \text{Eq}~\eqref{eq f series})
    \end{align*}
\end{proof}
Let us define how to extend the derivative computation from symbols to words, using the monadic functions.
\begin{definition}
    The \emph{derivative} of an \(M\)-monadic expression \(E\) over \( \Sigma \) w.r.t.\ a word \(w\) in \(\Sigma^*\) is the element \( d_w(E) \) in \( M (\mathrm{Exp}(\Sigma)) \) inductively defined as follows:
    \begin{align*}
        d_\varepsilon(E) & = \mathtt{pure}(E), & d_{a\cdot v}(E) & = d_a(E) \bind d_v,
    \end{align*}
    where \(a\) is a symbol in \( \Sigma \) and \(v\) a word in \(\Sigma^*\).
\end{definition}
% Alternatively, the derivation can be viewed as a monoid morphism from \((\Sigma^*, \cdot, \varepsilon)\) to \((\mathrm{Exp}(\Sigma) \rightarrow M(\mathrm{Exp}(\Sigma)), \kleisli, \mathtt{pure})\):
% \begin{align*}
%     d_\varepsilon & = \mathtt{pure}, & d_{a\cdot v} & = d_a \kleisli d_v.
% \end{align*}

Finally, it can be easily shown, by induction over the length of the words, following Proposition~\ref{prop der and quot}, that the derivatives computation can be used to define a syntactical computation of the weight of a word associated with an expression.
\begin{theorem}
    Let \(E\) be an \(M\)-monadic expression over an alphabet \( \Sigma \) and \(w\) be a word in \(\Sigma^*\).
    Then:
    \begin{equation*}
        \mathtt{weight}_w(E) = d_w(E) \bind \mathtt{Null}.
    \end{equation*}
\end{theorem}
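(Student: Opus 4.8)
The plan is to prove the identity $\mathtt{weight}_w(E) = d_w(E) \bind \mathtt{Null}$ by induction on the length of $w$, using Proposition~\ref{prop der and quot} as the engine for the inductive step and the earlier proposition $\mathtt{Null}(E) = \mathtt{weight}_\varepsilon(E)$ for the base case.

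For the base case $w = \varepsilon$, I would compute $d_\varepsilon(E) \bind \mathtt{Null} = \mathtt{pure}(E) \bind \mathtt{Null}$, and then invoke the first monad law $\mathtt{bind}(f)(\mathtt{pure}(s)) = f(s)$ to get $\mathtt{Null}(E)$. The earlier proposition then identifies this with $\mathtt{weight}_\varepsilon(E)$, which is exactly $\mathtt{weight}_w(E)$ for $w = \varepsilon$.

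For the inductive step, write $w = a \cdot v$ with $a \in \Sigma$ and $v \in \Sigma^*$. Unfolding the definition of $d_{a\cdot v}(E) = d_a(E) \bind d_v$, I want to compute
\begin{align*}
  d_{a\cdot v}(E) \bind \mathtt{Null}
  &= (d_a(E) \bind d_v) \bind \mathtt{Null} \\
  &= d_a(E) \bind (\lambda x \rightarrow d_v(x) \bind \mathtt{Null})
   = d_a(E) \bind (\lambda x \rightarrow \mathtt{weight}_v(x)),
\end{align*}
where the second equality is the associativity monad law $\mathtt{bind}(g)(\mathtt{bind}(f)(m)) = \mathtt{bind}(\lambda x \rightarrow \mathtt{bind}(g)(f(x)))(m)$ with $f = d_v$ and $g = \mathtt{Null}$, and the third equality applies the induction hypothesis pointwise (for each expression $x$ in the support of $d_a(E)$, $\mathtt{weight}_v(x) = d_v(x) \bind \mathtt{Null}$). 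Now $\lambda x \rightarrow \mathtt{weight}_v(x)$ is just $\mathtt{weight}_v$, so the last line is $d_a(E) \bind \mathtt{weight}_v$, which by Proposition~\ref{prop der and quot} equals $\mathtt{weight}_{av}(E) = \mathtt{weight}_w(E)$, completing the induction.

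The only subtlety — and the step I would be most careful about — is the use of the induction hypothesis "pointwise under $\bind$": strictly, the induction hypothesis gives the equality of functions $\mathtt{weight}_v = (\lambda x \rightarrow d_v(x) \bind \mathtt{Null})$ on $\mathrm{Exp}(\Sigma)$, and since $\bind$ only depends on its function argument through its values, substituting one for the other inside $d_a(E) \bind (-)$ is legitimate. This is routine but worth spelling out so the reader sees that no extra hypothesis on $M$ is needed beyond the monad laws already assumed. Everything else is a direct chain of the two monad laws plus the two previously established propositions, so there is no real obstacle here.
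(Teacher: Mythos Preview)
Your proposal is correct and follows exactly the approach the paper indicates: an induction on the length of \(w\), with the base case handled by the first monad law together with the proposition \(\mathtt{Null}(E)=\mathtt{weight}_\varepsilon(E)\), and the inductive step by the associativity monad law combined with Proposition~\ref{prop der and quot}. The paper itself only sketches this (``by induction over the length of the words, following Proposition~\ref{prop der and quot}''), so your write-up is in fact more detailed than what the paper provides.
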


Notice that, restraining monadic expressions to regular ones,
\begin{itemize}
    \item the \(\mathtt{Maybe}\) support leads to the classical derivatives~\cite{Brzo64},
    \item the \(\mathtt{Set}\) support leads to the partial derivatives~\cite{Ant96},
    \item the \(\mathtt{LinComb}\) support leads to the derivatives with multiplicities~\cite{LS05}.
\end{itemize}
% In the following section, we introduce another monad to deal with function composition, based on the notion of operad module.

\begin{example}\label{ex: calc der}
    Let us consider the function \(\mathtt{ExtDist}\) defined in Example~\ref{ex:def extdist} and the \(\mathtt{LinComb}(\mathbb{N})\)-monadic expression \(E = \mathtt{ExtDist}(a^*b^* + b^*a^*, b^*a^*b^*, a^*b^*a^*)\).
    \begin{align*}
        d_a(E)                   & = \mathtt{ExtDist}(a^*b^* + a^*, a^*b^*, a^*b^*a^*+ a^*)          \\
        d_{aa}(E)                & = \mathtt{ExtDist}(a^*b^* + a^*, a^*b^*, a^*b^*a^* + 2 \odot a^*) \\
        d_{aaa}(E)               & = \mathtt{ExtDist}(a^*b^* + a^*, a^*b^*, a^*b^*a^* + 3 \odot a^*) \\
        d_{aab}(E)               & = \mathtt{ExtDist}(b^*, b^*, b^*a^*)                              \\
        \mathtt{weight}_{aaa}(E) & = d_{aaa}(E) \bind \mathtt{Null}                                  \\
                                 & = \mathtt{ExtDist}(1 + 1, 1, 1 + 3) = 4 - 1 = 3                   \\
        \mathtt{weight}_{aab}(E) & = d_{aab}(E) \bind \mathtt{Null}
        = \mathtt{ExtDist}(1,1,1) = 0
    \end{align*}
\end{example}

In the next section, we show how to compute the derivative automaton associated with an expression.
\section{Automata Construction}\label{sec aut cons}

A \emph{category} \( \mathcal{C}\) is defined by:
\begin{itemize}
  \item a class \(\mathrm{Obj}_{\mathcal{C}}\) of \emph{objects},
  \item for any two objects \(A\) and \(B\), a set \( \mathrm{Hom}_{\mathcal{C}}(A,B)\) of \emph{morphisms},
  \item for any three objects \(A\), \(B\) and \(C\), an associative \emph{composition function} \(\circ_{\mathcal{C}} \) in \( \mathrm{Hom}_{\mathcal{C}}(B,C) \longrightarrow  \mathrm{Hom}_{\mathcal{C}}(A,B) \longrightarrow \mathrm{Hom}_{\mathcal{C}}(A,C)\),
  \item for any object \(A\), an \emph{identity morphism} \(\mathrm{id}_A \) in \( \mathrm{Hom}_{\mathcal{C}}(A,A) \), such that for any morphisms \(f\) in \(\mathrm{Hom}_{\mathcal{C}}(A,B)\) and \(g\) in \( \mathrm{Hom}_{\mathcal{C}}(B,A)\), \( f \circ_{\mathcal{C}} \mathrm{id}_A = f\) and \( \mathrm{id}_A \circ_{\mathcal{C}} g = g \).
\end{itemize}

Given a category \(\mathcal{C}\), a \(\mathcal{C}\)-automaton is a tuple \((\Sigma, I, Q, F, i, \delta, f)\) where
\begin{itemize}
  \item \(\Sigma \) is a set of symbols (the alphabet),
  \item \(I\) is the initial object, in \(\mathrm{Obj}(\mathcal{C})\),
  \item \(Q\) is the state object, in \(\mathrm{Obj}(\mathcal{C})\),
  \item \(F\) is the final object, in \(\mathrm{Obj}(\mathcal{C})\),
  \item \(i\) is the initial morphism, in \(\mathrm{Hom}_{\mathcal{C}}(I,Q)\),
  \item \(\delta \) is the transition function, in \(\Sigma \longrightarrow \mathrm{Hom}_{\mathcal{C}}(Q,Q)\),
  \item \(f\) is the value morphism, in \(\mathrm{Hom}_{\mathcal{C}}(Q,F)\).
\end{itemize}

The function \(\delta \) can be extended as a monoid morphism from the free monoid \((\Sigma^*, \cdot, \varepsilon)\)
to the morphism monoid \((\mathrm{Hom}_{\mathcal{C}}(Q,Q), \circ_{\mathcal{C}}, \mathrm{id}_Q)\), leading to the following weight definition.

The weight associated by a \(\mathcal{C}\)-automaton \(A=(\Sigma, I, Q, F, i, \delta, f)\) with a word \(w\) in \(\Sigma^*\)
is the morphism \(\mathtt{weight}(w)\) in \(\mathrm{Hom}_{\mathcal{C}}(I,F)\) defined by
\begin{equation*}
  \mathtt{weight}(w) = f \circ_\mathcal{C} \delta(w) \circ_\mathcal{C} i.
\end{equation*}

If the ambient category is the category of sets, and if \(I = \mathbbm{1} \), the weight of a word is equivalently
an element of \(F\).
Consequently, a deterministic (complete) automaton is equivalently a Set-automaton with \(\mathbbm{1}\) as the initial object
and \(\mathbb{B}\) as the final object.

Given a monad \(M\), the Kleisli composition of two morphisms \(f \in \mathrm{Hom}_{\mathcal{C}}(A, B) \) and \(g \in \mathrm{Hom}_{\mathcal{C}}(B, C)\)
is the morphism \((f \kleisli g) (x) = f(x) \bind g \) in \( \mathrm{Hom}_{\mathcal{C}}(A, C) \).
This composition defines a category, called the Kleisli category \(\mathcal{K}(M)\) of \(M\), where:
\begin{itemize}
  \item the objects are the sets,
  \item the morphisms between two sets \(A\) and \(B\) are
  the functions between \(A\) and \(M(B)\),
  \item the identity is the function \( \mathtt{pure} \).
\end{itemize}

Considering these categories:
\begin{itemize}
  \item a deterministic automaton is equivalently a \(\mathcal{K}(\mathtt{Maybe})\)-automaton,
  \item a nondeterministic automaton is equivalently a \(\mathcal{K}(\mathtt{Set})\)-automaton,
  \item a weighted automaton over a semiring \(\mathbb{K}\) is equivalently a \(\mathcal{K}(\mathtt{LinComb}(\mathbb{K}))\)-automaton,
\end{itemize}
all with \(\mathbbm{1}\) as both the initial object and the final object.

Furthermore, for a given expression \(E\), if \(i = \mathtt{pure}(E)\), \(\delta(a)(E') = \mathrm{d}_a(E')\) and \(f = \mathtt{Null}\),
we can compute the well-known derivative automata using the three previously defined supports, and the accessible part of these automata are finite ones
as far as classical expressions are concerned~\cite{Brzo64,Ant96,LS05}.

More precisely, extended expressions can lead to infinite automata, as shown in the next example.
\begin{example}
  Considering the computations of Example~\ref{ex: calc der},
  it can be shown that
  \begin{equation*}
    d_{a^n}(E) = \mathtt{ExtDist}(a^*b^*+a^*, a^*b^*, a^*b^*a^*+ n \odot a^*).
  \end{equation*}
  Hence, there is not a finite number of derivated terms, that are the states in the classical derivative automaton.
  This infinite automaton is represented in Figure~\ref{ref: fig auto lin comb monad}, where the final weights of the states are represented by double edges.
  The sink states are omitted.
\end{example}

\begin{figure}[H]
  \centering
  \begin{tikzpicture}[node distance=3cm,bend angle=30,transform shape,scale=1]
    \node[state, rounded rectangle] (1)  {\(\mathtt{ExtDist}(a^*b^* + b^*a^*, b^*a^*b^*, a^*b^*a^*)\)} ;

    \node[state, rounded rectangle, below left of=1,node distance = 3cm] (2) {\( \mathtt{ExtDist}(a^*b^* + a^*,a^*b^*, a^*b^*a^*
      +a^*)  \)};

    \node[state, rounded rectangle,below left of=2,node distance = 2.5cm] (2l) {\(  \mathtt{ExtDist}(a^*b^* + a^*,a^*b^*, a^*b^*a^*
      +2 \odot a^*) \)};

    \node[state, rounded rectangle,right of=2l,node distance = 4.5cm] (2r) {\(  \mathtt{ExtDist}(b^*,b^*,b^*a^*) \)};

    \node[state, rounded rectangle,below  of=2r,node distance = 3.5cm] (2+r) {\(  \mathtt{ExtDist}(0,0,a^*)  \)};

    \node[state, rounded rectangle,  right of=2,node distance=5cm] (3) {\( \mathtt{ExtDist}(b^*+b^*a^*,b^*a^*b^*+b^*,b^*a^*) \)};

    \node[state, rounded rectangle, below right of=3,node distance=3.5cm] (3l) {\( \mathtt{ExtDist}(b^*+b^*a^*,b^*a^*b^*+2\odot b^*,b^*a^*) \)};
    \node[state, rounded rectangle, below right of=2r,node distance=1.8cm] (3r) {\( \mathtt{ExtDist}(a^*,a^*b^*,a^*) \)};

    \node[ state, rounded rectangle, below  of=3r,node distance=1.5cm] (3ll) {\(  \mathtt{ExtDist}(0,b^*,0)  \)};

    \node[state, rounded rectangle,below  of=2l,node distance = 2.5cm] (2ll) {\(  \mathtt{ExtDist}(a^*b^* + a^*,a^*b^*, a^*b^*a^*
      +n \odot a^*) \)};

    % \node[state, rounded rectangle,below right of=2ll,node distance = 3.5cm] (5) {\(  \mathtt{ExtDist}(0,0,0)\)};

    \node[state, rounded rectangle,below  of=3l,node distance = 3.5cm] (3lll) {\(  \mathtt{ExtDist}(b^*+b^*a^*,b^*a^*b^*+n \odot b^*,b^*a^*) \)};
    %transitions 0

    \draw (1) ++(0cm,1cm) node {}  edge[->] (1);
    \draw[right] (1) ++(-4cm,0cm) node {$1$}  edge[double, Implies-] (1);
    \draw[right] (2) ++(-4cm,0cm) node {$1$}  edge[double, Implies-] (2);
    % \draw[right] (2r) ++(-1cm,1cm) node {$0$}  edge[double, Implies-] (2r);
    \draw[right] (2l) ++(-1cm,1cm) node {$2$}  edge[double, Implies-] (2l);
    \draw[right] (2ll) ++(-1cm,1cm) node {$n$}  edge[double, Implies-] (2ll);
    \draw[right] (2+r) ++(-2cm,0cm) node {$1$}  edge[double, Implies-] (2+r);
    \draw[right] (3) ++(4cm,0cm) node {$1$}  edge[double, Implies-] (3);
    % \draw[right] (3r) ++(0.5cm,1cm) node {$0$}  edge[double, Implies-] (3r);
    \draw[right] (3l) ++(1cm,1cm) node {$2$}  edge[double, Implies-] (3l);
    \draw[right] (3ll) ++(-1cm,1cm) node {$1$}  edge[double, Implies-] (3ll);
    \draw[right] (3lll) ++(1cm,1cm) node {$n$}  edge[double, Implies-] (3lll);
    % \draw[right] (5) ++(-1cm,1cm) node {$0$}  edge[<-,left] (5);
    \draw[right] (2ll) ++(0cm,-1cm) node {}  edge[dotted, -] (2ll);
    \draw[right] (3lll) ++(0cm,-1cm) node {}  edge[dotted, -] (3lll);

    %transitions 1
    \path[->]
    (1) edge[->, left,] node {\( a\)} (2)
    (1) edge[->, right] node {\( b \)} (3)
    (2) edge[->, left] node {\( b\)} (2r)
    (2) edge[->, left] node {\( a \)} (2l)
    (2l) edge[->, above] node {\( b \)} (2r)
    (3) edge[->, above right,bend right=5] node {\(
        a \)} (3r)
    (3) edge[->, right,bend right=-20] node {\( b \)} (3l)
    (3l) edge[->, above, in = 0, out = 235] node {\(a \)} (3r)
    (3r) edge[->, right] node {\(b \)} (3ll)
    (2l) edge[-, left] node {} (-3.9,-4.7)

    (-3.9,-5.5)  edge[->, right,] node {$a$} (2ll)
    (2ll) edge[->, above left] node {\(b \)} (2r)
    (3l) edge[-, left] node {} (5.35,-5.5)
    (2r) edge[->, bend right=30, left] node {\( a \)} (2+r)
    (5.35,-6.5) edge[->, left] node {$b$} (3lll)

    (3lll) edge[->, left] node {\(a \)} (3r)
    % (2+r) edge[->,  bend right=10, above] node {\(b \)} (5)
    % (3ll) edge[->,  bend right=-60, above] node {\(a\)} (5)
    ;
    \draw[dashed] (-3.9,-4.7)  to (-3.9,-5.5) ;
    \draw[dashed] (5.35,-5.5)   to (5.35,-6.5) ;

    \path[->,right] (2r) edge[loop above,->,above] node {$b$} ();
    \path[->,right] (3r) edge[out=145, in=115, loop,above] node {$a$} ();
    \path[->] (3ll) edge[out=-45, in=-25, loop,below] node {$b$} (3ll);
    \path[->,right] (2+r) edge[loop  below,->,below] node {$a$} ();
    % \path[->,right] (5) edge[loop  below,->,below] node {$a,b$} ();

  \end{tikzpicture}
  \caption{The (infinite) derivative weighted automaton associated with \(E\).}%
  \label{ref: fig auto lin comb monad}
\end{figure}
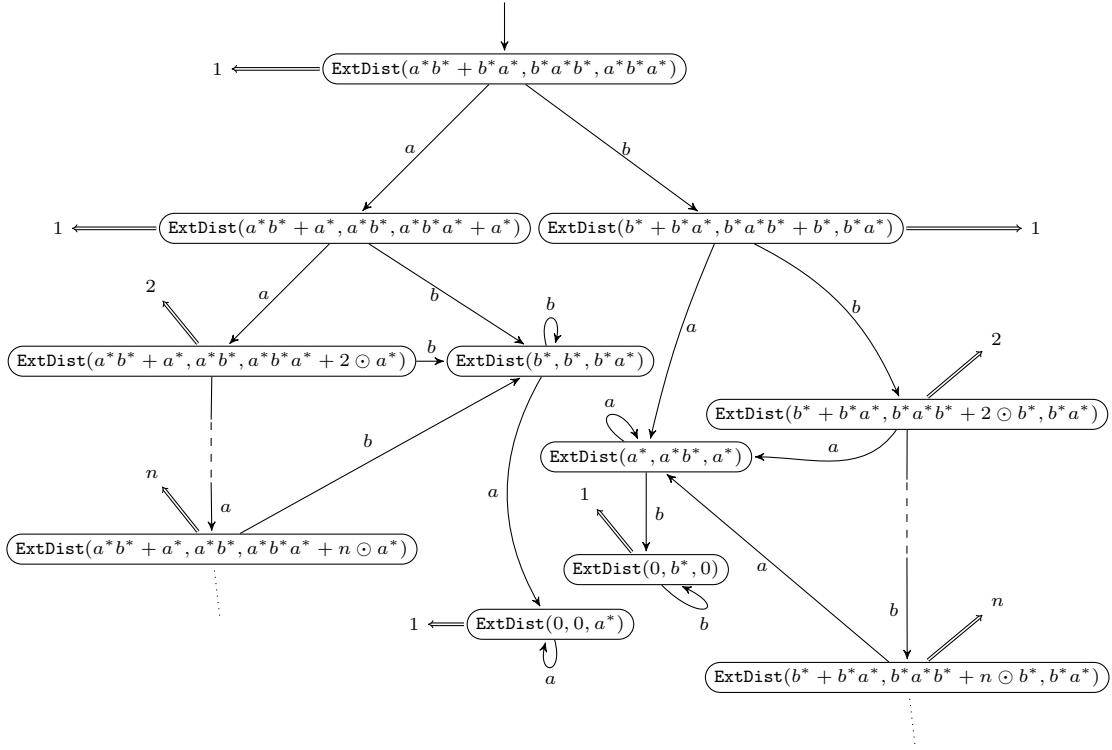

In the following section, let us show how to model a new monad in order to solve this problem.

\section{The Graded Module Monad}\label{sec new mon}

Let us consider an operad \(\mathbb{O}=(O,\circ,\mathrm{id})\) and the \emph{association} sending:
\begin{itemize}
    \item any set \(S\) to \(\bigcup_{n\in\mathbb{N}} O_n \times S^n\),
    \item any \(f\) in \(S \rightarrow S'\) to the function \(g\) in \( \bigcup_{n\in\mathbb{N}} O_n \times S^n \rightarrow \bigcup_{n\in\mathbb{N}} O_n \times S'^n\):
          \begin{equation*}
              g(o, (s_1, \ldots, s_n)) = (o, (f(s_1), \ldots, f(s_n)))
          \end{equation*}
\end{itemize}
It can be checked that this is a functor, denoted  by \(\mathtt{GradMod}(\mathbb{O})\).
Moreover, it forms a monad considering the two following functions:
\begin{align*}
    \mathtt{pure}(s)                & = (\mathrm{id}, s),                                                                                \\
    (o, (s_1, \ldots, s_n)) \bind f & = (o \circ (o_1, \ldots, o_n), (s_{1,1}, \ldots, s_{1, i_1}, \ldots, s_{n,1}, \ldots, s_{n, i_n}))
\end{align*}
where \(f(s_j) = (o_j, s_{j, 1}, \ldots, s_{j, i_j})\).
However, notice that \(\mathtt{GradMod}(\mathbb{O})(\mathbbm{1})\) cannot be easily evaluated as a value space.
Thus, let us compose it with another monad.
As an example, let us consider a semiring \(\mathbb{K}=(K,\times,+,1,0)\) and the operad \(\mathbb{O}\) of the \(n\)-ary functions over \(K\).
Hence, let us define the functor\footnote{it is folk knowledge that the composition of two functors is a functor.} \(\mathtt{GradComb}(\mathbb{O},\mathbb{K})\) that sends \(S\) to \(\mathtt{GradMod}(\mathbb{O})(\mathtt{LinComb}(\mathbb{K})(S))\).

To show that this combination is a monad, let us first define a function \(\alpha \) sending \(\mathtt{GradComb}(\mathbb{O},\mathbb{K})(S)\) to \(\mathtt{GradMod}(\mathbb{O})(S)\).
It can be easily done by converting a linear combination into an \emph{operadic combination}, \emph{i.e.} an element in \(\mathtt{GradMod}(\mathbb{O})(S)\), with the following function \(\mathtt{toOp}\):
\begin{gather*}
    \begin{multlined}
        \mathtt{toOp}((k_1, s_1) \boxplus \cdots \boxplus (k_n, s_n)) \\
        \qquad \qquad =
        (\lambda (x_1, \ldots, x_n) \rightarrow k_1 \times x_1 + \cdots + k_n \times x_n, (s_1, \ldots, s_n)),
    \end{multlined}\\
    \alpha(o, (\mathcal{L}_1, \ldots, \mathcal{L}_n))
    =
    (o\circ (o_1, \ldots, o_n), (s_{1,1}, \ldots, s_{1, i_1}, \ldots, s_{n,1}, \ldots, s_{n, i_n}))
\end{gather*}
where \(\mathtt{toOp}(\mathcal{L}_j) = (o_j, (s_{j,1}, \ldots, s_{j, i_j}))\).

Consequently, we can define the monadic functions as follows:
\begin{align*}
    \mathtt{pure}(s)                                    & = (\mathrm{id}, (1, s)),                                    \\
    (o, (\mathcal{L}_1, \ldots, \mathcal{L}_n)) \bind f & = \alpha(o, (\mathcal{L}_1, \ldots, \mathcal{L}_n)) \bind f
\end{align*}
where the second occurrence of \(\bind \) is the monadic function associated with the monad \(\mathtt{GradMod}(\mathbb{O})\).

Let us finally define an expressive support for this monad:
\begin{gather*}
    \begin{aligned}
        \mathtt{toExp}(o, (\mathcal{L}_1, \ldots, \mathcal{L}_n)) & = o(\mathtt{toExp}(\mathcal{L}_1), \ldots, \mathtt{toExp}(\mathcal{L}_n)),
    \end{aligned}\\
    \begin{aligned}
        (o, (\mathcal{L}_1, \ldots, \mathcal{L}_n)) + (o', (\mathcal{L}'_1, \ldots, \mathcal{L}'_{n'}))
         & = (o + o', (\mathcal{L}_1, \ldots, \mathcal{L}_n,\mathcal{L}'_1, \ldots, \mathcal{L}'_{n'}))      \\
        (o, (\mathcal{L}_1, \ldots, \mathcal{L}_n)) \times (o', (\mathcal{L}'_1, \ldots, \mathcal{L}'_{n'}))
         & = (o \times o', (\mathcal{L}_1, \ldots, \mathcal{L}_n,\mathcal{L}'_1, \ldots, \mathcal{L}'_{n'}))
    \end{aligned}\\
    \begin{aligned}
        \pm           & = +,                        &
        1             & = (\mathrm{id}, (1, \top)), &
        0             & = (\mathrm{id}, (0, \top)), &
        \underline{0} & = (\mathrm{id}, (0, \top)),
    \end{aligned}\\
    \begin{aligned}
        m \ltimes F & = \mathtt{pure}(\mathtt{toExp}(m) \cdot F),
    \end{aligned}\\
    \begin{aligned}
        (o, (\mathcal{M}_1, \ldots, \mathcal{M}_k)) \triangleright (o', (\mathcal{L}_1, \ldots, \mathcal{L}_{n})) & =
        (o(\mathcal{M}_1, \ldots, \mathcal{M}_k) \times o', (\mathcal{L}_1, \ldots, \mathcal{L}_{n})),                                                                                                                \\
        (o, (\mathcal{L}_1, \ldots, \mathcal{L}_n)) \triangleleft (o', (\mathcal{M}_1, \ldots, \mathcal{M}_{k}))  & = (o \times o'(\mathcal{M}_1, \ldots, \mathcal{M}_{k}), (\mathcal{L}_1, \ldots, \mathcal{L}_{n}))
    \end{aligned}\\
    \begin{multlined}
        f \divideontimes ((o_1, (\mathcal{L}_{1, 1}, \ldots, \mathcal{L}_{1, i_1})), \ldots, (o_n, (\mathcal{L}_{n, 1}, \ldots, \mathcal{L}_{n, i_n}))) \\
        \qquad \qquad = (f\circ(o_1, \ldots, o_n), (\mathcal{L}_{1, 1}, \ldots, \mathcal{L}_{1, i_1}, \ldots, \mathcal{L}_{n, 1}, \ldots, \mathcal{L}_{n, i_n}))
    \end{multlined}\\
    \begin{aligned}
        \text{where }(o + o')(x_1, \ldots, x_{n + n'}) & = o(x_1, \ldots, x_n) + o'(x_{n + 1}, \ldots, x_{n + n'})      \\
        (o \times o')(x_1, \ldots, x_{n + n'})         & = o(x_1, \ldots, x_n) \times o'(x_{n + 1}, \ldots, x_{n + n'})
    \end{aligned}
\end{gather*}
\begin{example}\label{ex finite deriv}
    Let us consider that two elements in \(\mathtt{GradComb}(\mathbb{O},\mathbb{K})(\mathrm{Exp}(\Sigma))\) are equal if they have the same image by \(\mathtt{toExp}\).
    Let us consider the expression \(E = \mathtt{ExtDist}(a^*b^* + b^*a^*, b^*a^*b^*, a^*b^*a^*)\) of Example~\ref{ex: calc der}.
    \begin{align*}
        d_a(E)                   & =    \mathtt{ExtDist} \divideontimes ((+, (a^*b^*, a^*)), (\mathrm{id}, a^*b^*), (+, (a^*b^*a^*, a^*)))           \\
                                 & = (\mathtt{ExtDist}\circ (+, \mathrm{id}, +), (a^*b^*, a^*, a^*b^*, a^*b^*a^*, a^*))                              \\
        d_{aa}(E)                & = (\mathtt{ExtDist}\circ (+, \mathrm{id}, +\circ(+, \mathrm{id})), (a^*b^*, a^*, a^*b^*, a^*b^*a^*, a^*, a^*))    \\
                                 & = (\mathtt{ExtDist}\circ (+, \mathrm{id}, + \circ(\mathrm{id}, 2 \times)), (a^*b^*, a^*, a^*b^*, a^*b^*a^*, a^*)) \\
        d_{aaa}(E)               & = (\mathtt{ExtDist}\circ (+, \mathrm{id}, +\circ(\mathrm{id}, 3 \times)), (a^*b^*, a^*, a^*b^*, a^*b^*a^*, a^*))  \\
        d_{aab}(E)               & = (\mathtt{ExtDist}\circ (+, \mathrm{id}, +), (b^*, \emptyset, b^*, b^*a^*, \emptyset))                           \\
                                 & = (\mathtt{ExtDist}, (b^*, b^*, b^*a^*))                                                                          \\
        \mathtt{weight}_{aaa}(E) & = d_{aaa}(E) \bind \mathtt{Null}                                                                                  \\
                                 & = \mathtt{ExtDist}\circ (+, \mathrm{id}, +) (1, 1, 1, 1, 3)                                                       \\
                                 & = \mathtt{ExtDist}(1 + 1, 1, 1 + 3) = 4 - 1 = 3                                                                   \\
        \mathtt{weight}_{aab}(E) & = d_{aab}(E) \bind \mathtt{Null}
        = \mathtt{ExtDist}(1,1,1) = 0
    \end{align*}
    Using this monad, the number of derivated terms, that is the number of states in the associated derivative automaton,
    is finite.
    Indeed, the computations are absorbed in the transition structure.
    This automaton is represented in Figure~\ref{ref: fig auto grad comb monad}.
    Notice that the dashed rectangle represent the functions that are composed during the traversal associated with a word.
    The final weights are represented by double edges.
    The sink states are omitted.
    The state \(b^*\) is duplicated to simplify the representation.
\end{example}

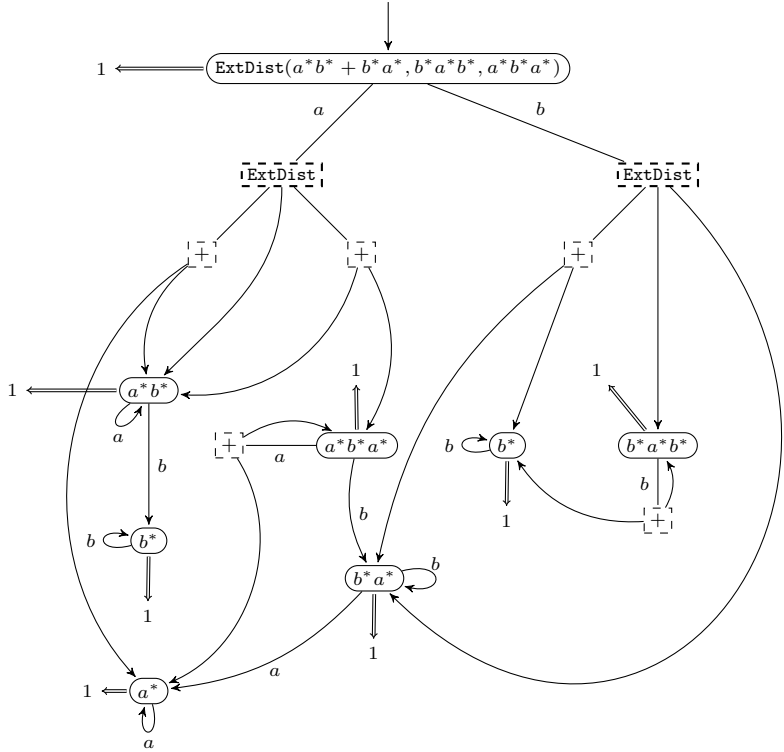
\begin{figure}[H]
    \centering
    \begin{tikzpicture}[node distance=2.5cm,bend angle=30,transform shape,scale=1]
        \node[state, rounded rectangle] (1)  {\(\mathtt{ExtDist}(a^*b^* + b^*a^*, b^*a^*b^*, a^*b^*a^*)\)} ;

        \node[thick, state,rectangle, dashed, below left of=1,node distance = 2cm] (2) {\( \mathtt{ExtDist}  \)};

        \node[state, rectangle, dashed, below left of=2,node distance = 1.5cm] (2+l) {\( +  \)};
        \node[state , rectangle, dashed,below right of=2,node distance = 1.5cm] (2+r) {\( +  \)};

        \node[thick,state, dashed, rectangle, right of=2,node distance=5cm] (3) {\( \mathtt{ExtDist} \)};

        \node[state ,  rectangle, dashed ,below left  of=3,node distance=1.5cm] (3+r) {\( + \)};

        %id states
        \node[below of=2,node distance = 1.1cm, draw = none] (2id) {};
        \node[below right of=3,node distance=2.5cm, draw = none] (3+l) {};
        \node[ below  of=3,node distance=1.1cm, draw = none] (3id) {};

        \node[state, rounded rectangle, below  of=3id,node distance=2.5cm] (4) {\(b^*a^*b^* \)};

        \node[state, rectangle, dashed, below  of=4,node distance=1cm] (4+) {\( + \)};
        \node[state, rounded rectangle, left of=4,node distance=4cm] (5) {\(a^*b^*a^* \)};

        \node[ state, rectangle, dashed,  left  of=5,node distance=1.7cm] (5+) {\( + \)};

        \node[state, rounded rectangle, below left of=2id,node distance=2.5cm] (7) {\(a^*b^* \)};
        \node[state, rounded rectangle,below  of=7,node distance=4cm] (6) {\(a^* \)};
        \node[state, rounded rectangle, left of=4,node distance=2cm] (8) {\(b^* \)};

        \node[state, rounded rectangle,  below  left of=8,node distance=2.5cm] (9) {\( b^*a^* \)};

        \node[state, rounded rectangle,  above of=6,node distance=2cm] (81) {\(b^* \)};
        % \node[state, rounded rectangle, below of=5+,node distance=1.5cm] (10) {\( \emptyset  \)};

        %transitions 0
        \draw (1) ++(0cm,1cm) node {}  edge[->] (1);
        \draw[right] (1) ++(-4cm,0cm) node {$1$}  edge[double, Implies-] (1);
        \draw[right] (6) ++(-1cm,0cm) node {$1$}  edge[double, Implies-] (6);
        \draw[right] (7) ++(-2cm,0cm) node {$1$}  edge[double, Implies-] (7);
        \draw[right] (9) ++(-0.2cm,-1cm) node {$1$}  edge[double, Implies-] (9);
        \draw[right] (5) ++(-0.2cm,1cm) node {$1$}  edge[double, Implies-] (5);
        \draw[right] (4) ++(-1cm,1cm) node {$1$}  edge[double, Implies-] (4);
        \draw[right] (8) ++(-0.2cm,-1cm) node {$1$}  edge[double, Implies-] (8);
        \draw[right] (81) ++(-0.2cm,-1cm) node {$1$}  edge[double, Implies-] (81);
        % \draw[right] (10) ++(-1cm,0.75cm) node {$0$}  edge[<-,below] (10);

        %transitions 1
        \path[->]
        (1) edge[-, above left] node {\( a\)} (2)
        (1) edge[-, above right] node {\( b \)} (3)

        % (2) edge[-, right] node {} (2id)
        % (2id) edge[->] node {} (7)
        (2) edge[->, out = 270, in = 45] node {} (7)

        (2) edge[-, right] node {} (2+r)
        (2) edge[-, right] node {} (2+l)
        (3) edge[-, right] node {} (3+r)

        % (3) edge[-, right] node {} (3id)
        % (3id) edge[->] node {} (4)
        (3) edge[->, ] node {} (4)

        % (3) edge[-, right] node {} (3+l)
        % (3+l) edge[->,  bend right=-70] node {} (9)
        (3) edge[->, out = -45, in = -45, looseness = 2] node {} (9)

        (3+r) edge[->] node {} (8)

        (3+r) edge[->, bend right=20] node {} (9)
        (2+r)  edge[->,bend right=-40] node {} (7)
        (2+r)  edge[->,bend right=-30] node {} (5)
        (2+l)  edge[->,bend right=50] node {} (6)
        (2+l)  edge[->,bend right=30] node {} (7)
        (5)  edge[-] node {\(a\)} (5+)
        (5+)  edge[->,bend right=-30] node {} (5)
        (5+)  edge[->,bend right=-50] node {} (6)
        (4)  edge[-,left] node {$b$} (4+)
        (4+)  edge[->,bend right=30] node {} (4)
        (4+)  edge[->,bend right=-30] node {} (8)
        (5)  edge[->,bend right=20, right] node {$b$} (9)
        (9)  edge[->,bend right=-20, below] node {$a$} (6)
        % (6)  edge[->,bend right=20, left] node {$a$} (10)

        (7)  edge[->,bend right=0, right] node {$b$} (81)
        % (81)  edge[->,bend right=0, below] node {$a$} (10)
        ;
        \path[->,right] (7) edge[out=-145, in=-115, loop,below] node {$a$} ();
        \path[->,right] (6) edge[loop below,->,below] node {$a$} ();
        \path[->,right] (9) edge[loop right,->,above] node {$b$} ();
        % \path[->,right] (10) edge[loop above,->,above] node {$a,b$} ();
        \path[->,right] (8) edge[loop left,->,left] node {$b$} ();
        \path[->,right] (81) edge[loop left,->,left] node {$b$} ();

    \end{tikzpicture}
    \caption{The Associated Derivative Automaton of  \( \mathtt{ExtDist}(a^*b^* + b^*a^*, b^*a^*b^*, a^*b^*a^*)\). }%
    \label{ref: fig auto grad comb monad}
\end{figure}

However, notice that not every monadic expression produces a finite set of derivated terms, as shown in the next example.

\begin{example}
    Let us consider the expression \(E\) of Example~\ref{ex: calc der} and the expression \(F=E\cdot c^*\).
    It can be shown that
    \begin{align*}
        d_{a^n}(F) & = \mathtt{toExp}(d_{a^n}(E))\cdot c^*                                     \\
                   & = \mathtt{ExtDist}(a^*b^*+a^*, a^*b^*, a^*b^*a^*+ n \odot a^*) \cdot c^*.
    \end{align*}
\end{example}

The study of the necessary and sufficient conditions of monads that lead to a finite set of derivated terms is one of the next steps of our work.
\section{Haskell Implementation}\label{sec haskell}
The notions described in the previous sections have been implemented in Haskell, as follows:
\begin{itemize}
      \item The notion of monad over a sub-category of sets is a typeclass using the \emph{Constraint kind} to specify a sub-category;
      \item \(n\)-ary functions and their operadic structures are implemented using fixed length vectors, the size of which is determined at compilation using type level programming;
      \item The notion of graded module is implemented through an existential type to deal with unknown arities:
            Its monadic structure is based on an extension of heterogeneous lists, the \emph{graded vectors}, typed w.r.t.\ the list of the arities of the elements it contains;
      \item The parser and some type level functions are based on dependently typed programming with singletons~\cite{AW12}, allowing, for example, determining the type of the monads or the arity of the functions involved at run-time;
      \item An application is available here~\cite{gitLM} illustrating the computations:
            \begin{itemize}
                  \item the backend uses \href{https://hackage.haskell.org/package/servant}{servant} to define an API;
                  \item the frontend is defined using \href{https://hackage.haskell.org/package/reflex}{Reflex}, a functional reactive programming engine and cross compiled in JavaScript with GHCJS\@.
            \end{itemize}
            As an example, the monadic expression of the previous examples can be entered in the web application as the input \texttt{ExtDist(a*.b*+b*.a*,b*.a*.b*,a*.b*.a*)}.
\end{itemize}

\section{Capture Groups}\label{sec capt group}

Capture groups are a standard feature of POSIX regular expressions where parenthesis are used to memorize some part of the input string being matched in order to reuse either for substitution or matching.
We give here an equivalent definition along with derivation formulae and a monadic definition.
The semantic of this definition conforms to those of POSIX expressions.
Precisely, when a capture group has been involved more than one time due to a stared subexpression, the value of the corresponding variable corresponds to the last capture.

\subsection{Syntax of Expressions with Capture Groups}

A \emph{capture-group expression} \(E\) over a symbol alphabet \(\Sigma \) and a variable alphabet \(\Gamma \)
(or \(\Sigma,\Gamma \)-expression for short) is inductively defined as
\begin{align*}
  E & = a,           & E & = \varepsilon, & E & = \emptyset, \\
  E & = F + G,       & E & = F\cdot G,    & E & = F^*,       \\
  E & = {{{(F)}_x}}, &   &                & E & = x,
\end{align*}
where \(F\) and \(G\) are two \(\Sigma,\Gamma \)-expressions, \(a\) is a symbol in \(\Sigma \), \(u\) is in \(\Sigma^*\) and
\(x\) is a variable in \(\Gamma \).
In the POSIX syntax, capture groups are implicitly mapped with variables respectively with the order of the opening parenthesis of a pair.
Here, each capture group is associated explicitly to a variable by indexing the closing parenthesis with the name of this variable.

\subsection{Contextual Expressions and their Contextual Languages}

In order to define the contextual language and the derivation of capture-group expressions, we need to extend the syntax of the expressions in order to attach to any capture group the current part of the input string captured during an execution.

A \emph{contextual capture-group expression} \(E\) over a symbol alphabet \(\Sigma \) and a variable alphabet \(\Gamma \)
(or \(\Sigma,\Gamma \)-expression for short) is inductively defined as
\begin{align*}
  E & = a,       & E & = \varepsilon, & E & = \emptyset, \\
  E & = F + G,   & E & = F\cdot G,    & E & = F^*,       \\
  E & = (F)_x^u, &   &                & E & = x,
\end{align*}
where \(F\) and \(G\) are two \(\Sigma,\Gamma \)-expressions, \(a\) is a symbol in \(\Sigma \), \(u\) is in \(\Sigma^*\) and
\(x\) is a variable in \(\Gamma \).

Notice that a \(\Sigma,\Gamma \)-expression is equivalent to a contextual capture-group expression
where \(u= \varepsilon \) for every occurrence of capture group.

In the following, we consider that a \emph{context} is a function from \(\Gamma \) to \(\mathrm{Maybe}(\Sigma^*) \),
modelling the possibility that a variable was initialized (or not) during the parsing.
The set of contexts is denoted by \(\mathrm{Ctxt}(\Gamma, \Sigma)\).

Using these notions of contexts, let us now explain the semantics of contextual capture-group expressions.
While parsing, a context is built to memorize the different affectations of words to variables.
Therefore, a (contextual) language associated with an expression is a set of couples built
from a language and the context that was used to compute it.

The classic atomic cases (a symbol, the empty word or the empty set) are easy to define, preserving the context.
Another one is the case of a variable \(x\): the context is applied here to compute the associated word (if it exists)
and is preserved.

The recursive cases are interpreted as such:
\begin{itemize}
  \item The contextual language of a sum of two expressions is the union of their contextual languages, computed independently.
  \item The contextual language of a catenation of two expressions \(F\) and \(G\) is computed in three steps.
        First, the contextual language of \(F\) is computed.
        %which represent a language \(L\) built with respect to the context \(\mathrm{ctxt}\),
        Secondly, for each couple \((L, \mathrm{ctxt}) \) of this contextual language,
        the function  \(\mathrm{ctxt}\) is considered as the new context to compute the contextual language
        of \(G\), leading to new couples \((L', \mathrm{ctxt}') \).
        Finally, for each of these combinations, a couple \((L\cdot L', \mathrm{ctxt}')\)
        is added to form the resulting contextual language.
  \item The contextual language of a starred expression is, classically, the infinite union of the powered contextual languages, computed
        by iterated catenations.
  \item The contextual language of a captured expression \({(F)}_x^u\) is computed in two steps.
        First, the contextual language of \(F\) is computed. Then, for each couple \((L, \mathrm{ctxt}) \) of it,
        % which is a language \(L\) built with respect to the context \(\mathrm{ctxt}\), 
        a word \(w\) is chosen
        in \(L\) and the context \(\mathrm{ctxt}\) must be updated coherently.
\end{itemize}

More formally, the \emph{contextual language of} a \(\Sigma, \Gamma \)-expression \(E\)
associated with a context \(\mathrm{ctxt}\) in \(\mathrm{Ctxt}(\Gamma, \Sigma)\)
is the subset \(\mathsf{L}^{\mathrm{ctxt}}(E)\) of \( {2}^{\Sigma^*} \times \mathrm{Ctxt}(\Gamma, \Sigma)\)
inductively defined as follows:
\begin{align*}
  \mathsf{L}^{\mathrm{ctxt}}(a)           & = \{ (\{a\}, \mathrm{ctxt}) \},                                                 &
  \mathsf{L}^{\mathrm{ctxt}}(\varepsilon) & = \{ (\{\varepsilon \}, \mathrm{ctxt}) \},                                        \\
  \mathsf{L}^{\mathrm{ctxt}}(\emptyset)   & = \emptyset,                                                                    &
  \mathsf{L}^{\mathrm{ctxt}}(x)           & =
  \begin{cases}
    \emptyset                  & \text{if } \mathrm{ctxt}(x) = \mathrm{Nothing},           \\
    \{(\{w\}, \mathrm{ctxt})\} & \text{otherwise if } \mathrm{ctxt}(x) = \mathrm{Just}(w), \\
  \end{cases}                                                                                                   \\
  \mathsf{L}^{\mathrm{ctxt}}(F+G)         & = \mathsf{L}^{\mathrm{ctxt}}(F) \cup \mathsf{L}^{\mathrm{ctxt}}(G),             &
  \mathsf{L}^{\mathrm{ctxt}}(F \cdot G)   & =
  \bigcup_{
  \substack{
  (L_1, {\mathrm{ctxt}}_1)\in \mathsf{L}^{\mathrm{ctxt}}(F),                                                                  \\
  (L_2, {\mathrm{ctxt}}_2)\in \mathsf{L}^{{\mathrm{ctxt}}_1}(G)
  }
  }
  \{(L_1\cdot L_2, {\mathrm{ctxt}}_2) \},                                                                                     \\
  \mathsf{L}^{\mathrm{ctxt}}(F^*)         & = \bigcup_{n \in \mathbb{N}} {(\mathsf{L}^{\mathrm{ctxt}}(F))}^{\underline{n}}, &
  \mathsf{L}^{\mathrm{ctxt}}({(F)}_x^u)   & =
  \bigcup_{
  \substack{
  (L_1, {\mathrm{ctxt}}_1)\in \mathsf{L}^{\mathrm{ctxt}}(F),                                                                  \\
  w\in L_1
  }
  }
  \{(\{w\}, {[{\mathrm{ctxt}}_1]}_{x \leftarrow uw})\},
\end{align*}
where \(F\) and \(G\) are two \(\Sigma,\Gamma \)-expressions, \(a\) is a symbol in \(\Sigma \),
\(x\) is a variable in \(\Gamma \), \(u\) is in \(\Sigma^*\), \( {\mathcal{L}}^{\underline{n}} \) is defined,
for any set \(\mathcal{L}\) of couples (language, context) by
\begin{equation*}
  {\mathcal{L}}^{\underline{n}} =
  \begin{cases}
    \displaystyle \bigcup_{(L, \mathrm{ctxt}) \in \mathcal{L}} \{(\{\varepsilon\}, \mathrm{ctxt})\} & \text{if } n = 0, \\
    \displaystyle \bigcup_{\substack{(L_1, {\mathrm{ctxt}}_1)\in \mathcal{L},                                           \\ (L_2, {\mathrm{ctxt}}_2)\in {\mathcal{L}}^{\underline{n - 1}} }} \{(L_1\cdot L_2, {\mathrm{ctxt}}_2) \} & \text{otherwise,}
  \end{cases}
\end{equation*}
and \({[{\mathrm{ctxt}}]}_{x \leftarrow w}\) is the context defined by
\begin{equation*}
  {[{\mathrm{ctxt}}]}_{x \leftarrow w} (y)=
  \begin{cases}
    \mathtt{Just}(w) & \text{if } x = y, \\
    \mathrm{ctxt}(y) & \text{otherwise.}
  \end{cases}
\end{equation*}
The \emph{contextual language} of an expression \(E\) is the set of couples obtained from an uninitialised context,
where nothing is associated with any variable, that is the set
\begin{equation*}
  \mathsf{L}^{\lambda\_ \rightarrow \mathrm{Nothing}}(E).
\end{equation*}
Finally, the \emph{language denoted} by an expression \(E\) is the set of words obtained by forgetting the contexts,
that is the set
\begin{equation*}
  \bigcup_{(L, \_) \in \mathsf{L}^{\lambda\_ \rightarrow \mathrm{Nothing}}(E)} L.
\end{equation*}

\begin{example}\label{ex lang capt group}
  Let us consider the three following expressions
  over the symbol alphabet \( \{ a, b, c \} \) and the variable alphabet \( \{x\} \):
  \begin{align*}
    E & = E_1 \cdot E_2, & E_1 & = {({{(a^*)}_x}bx)}^*, & E_2 & = cx.
  \end{align*}
  The language denoted by \(E_2\) is empty, since it is computed from the empty context, where nothing
  is associated with \(x\).
  However, parsing \(E_1\) allows us to compute contexts that define word values to affect to \(x\).
  Let us thus show how is defined the contextual language of \(E_1\):
  \begin{itemize}
    \item the contextual language of \({(a^*)}_x\) is the set
          \begin{equation*}
            \bigcup_{n \in \mathbb{N}} \{(\{a^n\},\lambda x \rightarrow \mathrm{Just}(a^n))\}
          \end{equation*}
          where each word \(a^n\) is recorded in a context;
    \item the contextual language of \( {{(a^*)}_x}bx \) is the set
          \begin{equation*}
            \bigcup_{n \in \mathbb{N}} \{(\{ a^n b a^n \},\lambda x \rightarrow \mathrm{Just}(a^n))\}
          \end{equation*}
          where each word \(a^n\) is recorded in a context applied to evaluate the variable \(x\);
    \item the contextual language of \(E_1\) is the union of the two following sets \(S_1\) and \(S_2\):
          \begin{align*}
            S_1 & = \{(\{\varepsilon\}, \lambda x \rightarrow \mathrm{Nothing})\}                                                                   \\
            S_2 & = \{ (\{a^n b a^n \mid n \in \mathbb{N}\}^*\cdot \{a^m b a^m\}, \lambda x \rightarrow  \mathrm{Just}(a^m)) \mid m\in \mathbb{N}\}
          \end{align*}
          where each iteration of the outermost star produces a new record for the variable \(x\) in the context;
          however, notice that only the last one is recorded at the end of the process.
  \end{itemize}
  Finally, the language of \(E\) is obtained by considering the contexts obtained from the parsing of \(E_1\)
  to evaluate the occurrence of \(x\) in \(E_2\), leading to the set
  \begin{equation*}
    \bigcup_{m\in \mathbb{N}} (\{a^n b a^n \mid n \in \mathbb{N}\}^*\cdot \{a^m b a^m c a^m\}).
  \end{equation*}
\end{example}

Obviously, some classical equations still hold with these computations:
\begin{lemma}
  Let \(E\), \(F\) and \(G\) be three \(\Sigma,\Gamma \)-expressions and \(\mathrm{ctxt}\) be a context in \(\mathrm{Ctxt}(\Gamma, \Sigma)\).
  The two following equations hold:
  \begin{align*}
    \mathsf{L}^{\mathrm{ctxt}}(E\cdot (F + G)) & = \mathsf{L}^{\mathrm{ctxt}}(E\cdot F + E\cdot G)      \\
    \mathsf{L}^{\mathrm{ctxt}}(F^*)            & = \mathsf{L}^{\mathrm{ctxt}}(\varepsilon + F\cdot F^*)
  \end{align*}
\end{lemma}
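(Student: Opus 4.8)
The plan is to prove both equations by direct structural manipulation of the definition of $\mathsf{L}^{\mathrm{ctxt}}$, since both are purely "one-level" identities that unfold into reindexings of the same unions. For the first equation, I would start from the right-hand side: by the defining clause for $+$, $\mathsf{L}^{\mathrm{ctxt}}(E\cdot F + E\cdot G) = \mathsf{L}^{\mathrm{ctxt}}(E\cdot F)\cup \mathsf{L}^{\mathrm{ctxt}}(E\cdot G)$. Expanding each catenation with the $F\cdot G$ clause, each summand becomes a union over couples $(L_1,\mathrm{ctxt}_1)\in\mathsf{L}^{\mathrm{ctxt}}(E)$ and then over couples in $\mathsf{L}^{\mathrm{ctxt}_1}(F)$ (respectively $\mathsf{L}^{\mathrm{ctxt}_1}(G)$). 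The key observation is that the \emph{same} $(L_1,\mathrm{ctxt}_1)$ range appears in both, so one can factor the outer union: the union of the two bodies over a common index set equals the union over that index set of the union of the two bodies. The inner union $\bigcup_{(L_2,\mathrm{ctxt}_2)\in\mathsf{L}^{\mathrm{ctxt}_1}(F)}\{(L_1 L_2,\mathrm{ctxt}_2)\}\cup\bigcup_{(L_2,\mathrm{ctxt}_2)\in\mathsf{L}^{\mathrm{ctxt}_1}(G)}\{(L_1 L_2,\mathrm{ctxt}_2)\}$ is then recognized, via the $+$ clause applied at context $\mathrm{ctxt}_1$, as the body of $\mathsf{L}^{\mathrm{ctxt}}(E\cdot(F+G))$. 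Reading the chain of equalities back gives the left-hand side. The only subtlety is that $\mathsf{L}^{\mathrm{ctxt}_1}(F+G)=\mathsf{L}^{\mathrm{ctxt}_1}(F)\cup\mathsf{L}^{\mathrm{ctxt}_1}(G)$ must be used at the context $\mathrm{ctxt}_1$ produced while reading $E$, not at the ambient $\mathrm{ctxt}$ — but that is exactly what the nested-union form of the catenation clause provides, so no difficulty arises.

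For the second equation, I would unfold the right-hand side with the $+$ clause, $\mathsf{L}^{\mathrm{ctxt}}(\varepsilon + F\cdot F^*) = \mathsf{L}^{\mathrm{ctxt}}(\varepsilon)\cup\mathsf{L}^{\mathrm{ctxt}}(F\cdot F^*)$, and the left-hand side with the definition $\mathsf{L}^{\mathrm{ctxt}}(F^*) = \bigcup_{n\in\mathbb{N}}(\mathsf{L}^{\mathrm{ctxt}}(F))^{\underline n}$. I would then split the index set $\mathbb{N}$ into $\{0\}$ and $\{n\ge 1\}$. The $n=0$ term is $\bigcup_{(L,\mathrm{ctxt})\in\mathsf{L}^{\mathrm{ctxt}}(F)}\{(\{\varepsilon\},\mathrm{ctxt})\}$; this should be shown to equal $\mathsf{L}^{\mathrm{ctxt}}(\varepsilon)=\{(\{\varepsilon\},\mathrm{ctxt})\}$. \emph{This is the step I expect to be the main obstacle}: the equality fails if $\mathsf{L}^{\mathrm{ctxt}}(F)$ is empty (giving $\emptyset$ instead of $\{(\{\varepsilon\},\mathrm{ctxt})\}$) or if reading $F$ can produce a side effect on the context (giving $(\{\varepsilon\},\mathrm{ctxt}')$ with $\mathrm{ctxt}'\ne\mathrm{ctxt}$). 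So the lemma as stated is either implicitly assuming a reasonable class of expressions, or the intended reading of $\mathcal{L}^{\underline 0}$ is simply $\{(\{\varepsilon\},\mathrm{ctxt})\}$ regardless of $\mathcal{L}$ — I would flag this and, if the latter reading is intended, the $n=0$ term is $\mathsf{L}^{\mathrm{ctxt}}(\varepsilon)$ on the nose. Under that reading the argument goes through cleanly.

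For the remaining part, the union over $n\ge 1$ of $(\mathsf{L}^{\mathrm{ctxt}}(F))^{\underline n}$ must be identified with $\mathsf{L}^{\mathrm{ctxt}}(F\cdot F^*)$. Writing $m=n-1$, the recursive clause of $\mathcal{L}^{\underline n}$ gives $(\mathsf{L}^{\mathrm{ctxt}}(F))^{\underline{n}} = \bigcup_{(L_1,\mathrm{ctxt}_1)\in\mathsf{L}^{\mathrm{ctxt}}(F),\,(L_2,\mathrm{ctxt}_2)\in(\mathsf{L}^{\mathrm{ctxt}}(F))^{\underline m}}\{(L_1 L_2,\mathrm{ctxt}_2)\}$, and here one needs that $(\mathsf{L}^{\mathrm{ctxt}}(F))^{\underline m}$ appearing inside depends only on $\mathrm{ctxt}$ and not on $\mathrm{ctxt}_1$ — which is true for the star, since by definition $\mathsf{L}^{\mathrm{ctxt}_1}(F^*)=\bigcup_m(\mathsf{L}^{\mathrm{ctxt}_1}(F))^{\underline m}$, so I must additionally check that the powered language of the star uses the \emph{incoming} context at each iteration, matching the definition of $\mathcal{L}^{\underline n}$. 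Granting this, taking the union over $m\in\mathbb{N}$ and pulling it inside, the body becomes $\bigcup_{(L_1,\mathrm{ctxt}_1)\in\mathsf{L}^{\mathrm{ctxt}}(F),\,(L_2,\mathrm{ctxt}_2)\in\mathsf{L}^{\mathrm{ctxt}_1}(F^*)}\{(L_1 L_2,\mathrm{ctxt}_2)\}$, which is exactly $\mathsf{L}^{\mathrm{ctxt}}(F\cdot F^*)$ by the catenation clause. Combining the two pieces with the $+$ clause finishes the proof. Throughout, the work is bookkeeping on nested unions and careful tracking of which context each sublanguage is evaluated in; no induction on word length or expression size is needed, only the one-step definitional unfoldings.
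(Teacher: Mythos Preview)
Your approach is essentially the same as the paper's: unfold the catenation and sum clauses and reassociate the nested unions for the first equation, and split $\bigcup_n(\mathsf{L}^{\mathrm{ctxt}}(F))^{\underline{n}}$ at $n=0$ and factor the rest as $\mathsf{L}^{\mathrm{ctxt}}(F)\cdot\mathsf{L}^{\mathrm{ctxt}}(F^*)$ for the second. The two subtleties you flag---that $(\mathsf{L}^{\mathrm{ctxt}}(F))^{\underline{0}}$ need not literally equal $\{(\{\varepsilon\},\mathrm{ctxt})\}$, and that the powered set $\mathcal{L}^{\underline{m}}$ does not thread the context through to the next factor whereas the catenation clause for $F\cdot F^*$ does---are glossed over identically in the paper's own proof, so your concerns are warranted but do not indicate a divergence in method.
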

\begin{proof}
  Let us proceed by equality sequences:
  \begin{align*}
    \mathsf{L}^{\mathrm{ctxt}}(E\cdot (F + G)) & =
    \bigcup_{
    \substack{
    (L_1, {\mathrm{ctxt}}_1)\in \mathsf{L}^{\mathrm{ctxt}}(E),                                                                                                                        \\
    (L_2, {\mathrm{ctxt}}_2)\in \mathsf{L}^{{\mathrm{ctxt}}_1}(F+G)
    }
    }
    \{(L_1\cdot L_2, {\mathrm{ctxt}}_2) \}
                                               &                                                                                  & =
    \bigcup_{
    \substack{
    (L_1, {\mathrm{ctxt}}_1)\in \mathsf{L}^{\mathrm{ctxt}}(E),                                                                                                                        \\
    (L_2, {\mathrm{ctxt}}_2)\in \mathsf{L}^{{\mathrm{ctxt}}_1}(F) \cup \mathsf{L}^{{\mathrm{ctxt}}_1}(G)
    }
    }
    \{(L_1\cdot L_2, {\mathrm{ctxt}}_2) \}                                                                                                                                            \\
                                               & =
    \bigcup_{
    \substack{
    (L_1, {\mathrm{ctxt}}_1)\in \mathsf{L}^{\mathrm{ctxt}}(E),                                                                                                                        \\
    (L_2, {\mathrm{ctxt}}_2)\in \mathsf{L}^{{\mathrm{ctxt}}_1}(F)
    }
    }
    \{(L_1\cdot L_2, {\mathrm{ctxt}}_2) \}
                                               &                                                                                  & \cup
    \bigcup_{
    \substack{
    (L_1, {\mathrm{ctxt}}_1)\in \mathsf{L}^{\mathrm{ctxt}}(E),                                                                                                                        \\
    (L_2, {\mathrm{ctxt}}_2)\in \mathsf{L}^{{\mathrm{ctxt}}_1}(G)
    }
    }
    \{(L_1\cdot L_2, {\mathrm{ctxt}}_2) \}                                                                                                                                            \\
                                               & = \mathsf{L}^{\mathrm{ctxt}}(E\cdot F) \cup \mathsf{L}^{\mathrm{ctxt}}(E\cdot G)
                                               &                                                                                  & = \mathsf{L}^{\mathrm{ctxt}}(E\cdot F + E\cdot G)
  \end{align*}

  \begin{align*}
    \mathsf{L}^{\mathrm{ctxt}}(F^*) & = \bigcup_{n \in \mathbb{N}} {(\mathsf{L}^{\mathrm{ctxt}}(F))}^{\underline{n}}                                                                                             &
                                    & = {(\mathsf{L}^{\mathrm{ctxt}}(F))}^{\underline{0}} \cup \bigcup_{n \in \mathbb{N}, n \geq 1} {(\mathsf{L}^{\mathrm{ctxt}}(F))}^{\underline{n}}                              \\
                                    & = {(\mathsf{L}^{\mathrm{ctxt}}(F))}^{\underline{0}} \cup \bigcup_{n \in \mathbb{N}} \mathsf{L}^{\mathrm{ctxt}}(F) \cdot {(\mathsf{L}^{\mathrm{ctxt}}(F))}^{\underline{n}}  &
                                    & = {(\mathsf{L}^{\mathrm{ctxt}}(F))}^{\underline{0}} \cup \mathsf{L}^{\mathrm{ctxt}}(F) \cdot \bigcup_{n \in \mathbb{N}}  {(\mathsf{L}^{\mathrm{ctxt}}(F))}^{\underline{n}}   \\
                                    & = {L}^{\mathrm{ctxt}}(\varepsilon + F\cdot F^*)
  \end{align*}
\end{proof}
In order to solve the membership test for the contextual capture-group expressions, let us extend the classical
derivation method. But first, let us show how to extend the nullability predicate, needed at the end of the process.

\subsection{Nullability Computation}

The nullability predicate allows us to determine whether
the empty word belongs to the language denoted by an expression.
As far as capture groups are concerned, a context has to be computed.
Therefore, the nullability predicate can be represented as a set of contexts
the application of which produces a language that contains the empty word.

As we have seen, the nullability depends on the current context.
Given an expression and a context \(\mathrm{ctxt}\),
the nullability predicate is a set in \(2^{\mathrm{Ctxt}(\Gamma, \Sigma)}\), computed as follows:
\begin{align*}
  \mathrm{Null}^{\mathrm{ctxt}}(\varepsilon) & = \{\mathrm{ctxt}\}                                                                                      &
  \mathrm{Null}^{\mathrm{ctxt}}(\emptyset)   & = \emptyset                                                                                                \\
  \mathrm{Null}^{\mathrm{ctxt}}(a)           & = \emptyset                                                                                              &
  \mathrm{Null}^{\mathrm{ctxt}}(x)           & =
  \begin{cases}
    \{\mathrm{ctxt}\} & \text{ if } \mathrm{ctxt}(x) = \mathtt{Just}(\varepsilon) \\
    \emptyset         & \text{ otherwise.}
  \end{cases}                                                                                                                              \\
  \mathrm{Null}^{\mathrm{ctxt}}(E+F)         & = \mathrm{Null}^{\mathrm{ctxt}}(E) \cup \mathrm{Null}^{\mathrm{ctxt}}(F)                                 &
  \mathrm{Null}^{\mathrm{ctxt}}(E\cdot F)    & =
  \bigcup_{
  \substack{
  \mathrm{ctxt}'\in \mathrm{Null}^{\mathrm{ctxt}}(F),                                                                                                     \\
  \mathrm{ctxt}''\in \mathrm{Null}^{\mathrm{ctxt}'}(G)
  }
  }  \{\mathrm{ctxt}''\}                                                                                                                                  \\
  \mathrm{Null}^{\mathrm{ctxt}}(E^*)         & = \{\mathrm{ctxt}\}                                                                                      &
  \mathrm{Null}^{\mathrm{ctxt}}({(E)}_x^u)   & = \bigcup_{\mathrm{ctxt}'\in \mathrm{Null}^{\mathrm{ctxt}}(F)}   \{{[\mathrm{ctxt}']}_{x \leftarrow u}\}
\end{align*}
where \(E\) and \(F\) are two \(\Sigma,\Gamma \)-expressions, \(a\) is a symbol in \(\Sigma \),
\(x\) is a variable in \(\Gamma \) and \(u\) is in \(\Sigma^*\).

\begin{example}
  Let us consider the three expressions of Example~\ref{ex lang capt group}:
  \begin{align*}
    E & = E_1 \cdot E_2, & E_1 & = {({{(a^*)}_x}bx)}^*, & E_2 & = cx.
  \end{align*}
  For any context \(\mathrm{ctxt}\),
  \begin{align*}
    \mathrm{Null}^{\mathrm{ctxt}}(E_1) & = \{\mathrm{ctxt}\}, &
    \mathrm{Null}^{\mathrm{ctxt}}(E_2) & = \emptyset,         &
    \mathrm{Null}^{\mathrm{ctxt}}(E)   & = \emptyset.
  \end{align*}
\end{example}

The nullability predicate allows us to determine whether there exists a couple in the contextual language
of an expression such that its first component contains the empty word.
\begin{proposition}\label{prop null capt group}
  Let \(E\) be a \(\Sigma,\Gamma \)-expression and \(\mathrm{ctxt}\) be a context in \(\mathrm{Ctxt}(\Gamma, \Sigma)\).
  Then the two following conditions are equivalent:
  \begin{itemize}
    \item \(\mathrm{Null}^{\mathrm{ctxt}}(E) \neq \emptyset\),
    \item \(\exists (L, \_) \in \mathsf{L}^{\mathrm{ctxt}}(E) \mid \varepsilon \in L\).
  \end{itemize}
\end{proposition}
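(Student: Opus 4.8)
The plan is to prove the equivalence by structural induction on the $\Sigma,\Gamma$-expression $E$, establishing the statement simultaneously for all contexts $\mathrm{ctxt}$ (since the recursive clauses for catenation and capture feed modified contexts into the subexpressions). In fact it is cleaner to prove a slightly more precise statement that carries the context along: for every $E$ and every $\mathrm{ctxt}$,
\begin{equation*}
  \mathrm{Null}^{\mathrm{ctxt}}(E) = \{\,\mathrm{ctxt}' \mid \exists (L,\mathrm{ctxt}') \in \mathsf{L}^{\mathrm{ctxt}}(E) \text{ with } \varepsilon \in L\,\}.
\end{equation*}
The desired equivalence is then immediate by taking the left-hand side nonempty iff the right-hand side is nonempty. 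First I would check the atomic cases $a$, $\varepsilon$, $\emptyset$, and $x$: each follows directly by unfolding both definitions — e.g. for $x$, $\varepsilon \in L$ for the unique couple $(\{w\},\mathrm{ctxt})$ in $\mathsf{L}^{\mathrm{ctxt}}(x)$ iff $w = \varepsilon$ iff $\mathrm{ctxt}(x) = \mathtt{Just}(\varepsilon)$, matching the definition of $\mathrm{Null}^{\mathrm{ctxt}}(x)$.

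Next I would handle the union case: $\mathsf{L}^{\mathrm{ctxt}}(F+G) = \mathsf{L}^{\mathrm{ctxt}}(F) \cup \mathsf{L}^{\mathrm{ctxt}}(G)$, so the set of $\varepsilon$-witnessing couples splits as a union, and $\mathrm{Null}^{\mathrm{ctxt}}(F+G) = \mathrm{Null}^{\mathrm{ctxt}}(F) \cup \mathrm{Null}^{\mathrm{ctxt}}(G)$ matches by the two induction hypotheses. For catenation, the key observation is that $\varepsilon \in L_1 \cdot L_2$ iff $\varepsilon \in L_1$ and $\varepsilon \in L_2$. Scanning the union defining $\mathsf{L}^{\mathrm{ctxt}}(F \cdot G)$, an $\varepsilon$-witnessing couple $(L_1 \cdot L_2, \mathrm{ctxt}_2)$ exists iff there is $(L_1,\mathrm{ctxt}_1) \in \mathsf{L}^{\mathrm{ctxt}}(F)$ with $\varepsilon \in L_1$ and $(L_2,\mathrm{ctxt}_2) \in \mathsf{L}^{\mathrm{ctxt}_1}(G)$ with $\varepsilon \in L_2$; by the induction hypothesis applied to $F$ at context $\mathrm{ctxt}$ and to $G$ at context $\mathrm{ctxt}_1$, this is exactly the condition $\mathrm{ctxt}_1 \in \mathrm{Null}^{\mathrm{ctxt}}(F)$ and $\mathrm{ctxt}_2 \in \mathrm{Null}^{\mathrm{ctxt}_1}(G)$, i.e.\ $\mathrm{ctxt}_2 \in \mathrm{Null}^{\mathrm{ctxt}}(F\cdot G)$. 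The capture case $(F)_x^u$ is similar but one must be careful: a couple $(\{w\}, [\mathrm{ctxt}_1]_{x \leftarrow uw})$ in $\mathsf{L}^{\mathrm{ctxt}}((F)_x^u)$ witnesses $\varepsilon$ iff $w = \varepsilon$, which forces the context update to be $[\mathrm{ctxt}_1]_{x \leftarrow u}$; and $w = \varepsilon$ for some $w \in L_1$ with $(L_1,\mathrm{ctxt}_1) \in \mathsf{L}^{\mathrm{ctxt}}(F)$ is, by the induction hypothesis, exactly $\mathrm{ctxt}_1 \in \mathrm{Null}^{\mathrm{ctxt}}(F)$ — matching the definition $\mathrm{Null}^{\mathrm{ctxt}}((F)_x^u) = \bigcup_{\mathrm{ctxt}_1 \in \mathrm{Null}^{\mathrm{ctxt}}(F)} \{[\mathrm{ctxt}_1]_{x \leftarrow u}\}$.

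The star case is the main obstacle, since $\mathsf{L}^{\mathrm{ctxt}}(F^*) = \bigcup_{n} (\mathsf{L}^{\mathrm{ctxt}}(F))^{\underline n}$ is an infinite union and a priori one would need the induction hypothesis for arbitrary iterates, yet $\mathrm{Null}^{\mathrm{ctxt}}(F^*)$ is simply $\{\mathrm{ctxt}\}$. The point to verify is the following: for every $n$ and every $(L,\mathrm{ctxt}') \in (\mathsf{L}^{\mathrm{ctxt}}(F))^{\underline n}$ with $\varepsilon \in L$, one necessarily has $\mathrm{ctxt}' = \mathrm{ctxt}$, and conversely $(\{\varepsilon\},\mathrm{ctxt})$ is always reachable (it is the $n=0$ term). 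The forward direction is proved by an inner induction on $n$: for $n = 0$ the couple is literally $(\{\varepsilon\},\mathrm{ctxt})$; for $n \geq 1$, a couple $(L_1 \cdot L_2, \mathrm{ctxt}_2)$ witnesses $\varepsilon$ only if $\varepsilon \in L_1$ with $(L_1,\mathrm{ctxt}_1) \in \mathsf{L}^{\mathrm{ctxt}}(F)$ and $\varepsilon \in L_2$ with $(L_2,\mathrm{ctxt}_2) \in (\mathsf{L}^{\mathrm{ctxt}_1}(F))^{\underline{n-1}}$; by the inner induction hypothesis $\mathrm{ctxt}_2 = \mathrm{ctxt}_1$, and — here is the subtle step — one would like to conclude $\mathrm{ctxt}_1 = \mathrm{ctxt}$. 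This is \emph{not} true in general (a nullable iteration of a capture group does change the context!), so the statement $\mathrm{Null}^{\mathrm{ctxt}}(F^*) = \{\mathrm{ctxt}\}$ as written is only a \emph{sufficient} condition: it always contains $\mathrm{ctxt}$, which already guarantees the equivalence in Proposition~\ref{prop null capt group} in both directions (if $\mathrm{Null}$ is nonempty it contains $\mathrm{ctxt}$, and $(\{\varepsilon\},\mathrm{ctxt}) \in \mathsf{L}^{\mathrm{ctxt}}(F^*)$ always witnesses $\varepsilon$; conversely the $n=0$ term always makes the right-hand condition true, matching $\mathrm{Null}^{\mathrm{ctxt}}(F^*) = \{\mathrm{ctxt}\} \neq \emptyset$). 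So for the star case I would drop the attempt to prove set equality and argue directly at the level of the equivalence: both sides are unconditionally true because the $n=0$ iterate always supplies $(\{\varepsilon\},\mathrm{ctxt})$. Threading this asymmetry carefully through the proof — proving set equality for all non-star constructors but only the equivalence for star, and checking that this weaker star statement still suffices to push the induction through the catenation and capture cases (where we only ever use nonemptiness of the $\mathrm{Null}$ sets of subexpressions, not their exact contents) — is the delicate bookkeeping the proof requires.
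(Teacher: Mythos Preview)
Your approach --- structural induction on \(E\) --- is exactly the paper's, and your instinct to strengthen the invariant to the set equality
\[
  \mathrm{Null}^{\mathrm{ctxt}}(E) \;=\; \{\,\mathrm{ctxt}' \mid \exists (L,\mathrm{ctxt}') \in \mathsf{L}^{\mathrm{ctxt}}(E),\ \varepsilon \in L\,\}
\]
is the right move: the catenation case genuinely needs it. In \(F\cdot G\) you must feed into \(G\) a context \(\mathrm{ctxt}_1\) that simultaneously (i) appears as the second component of some \(\varepsilon\)-witnessing pair in \(\mathsf{L}^{\mathrm{ctxt}}(F)\) and (ii) lies in \(\mathrm{Null}^{\mathrm{ctxt}}(F)\). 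The bare equivalence ``\(\mathrm{Null}^{\mathrm{ctxt}}(F)\neq\emptyset\) iff some pair witnesses \(\varepsilon\)'' does \emph{not} let you align these two contexts. So your parenthetical claim in the last paragraph --- that catenation ``only ever uses nonemptiness of the \(\mathrm{Null}\) sets of subexpressions, not their exact contents'' --- is the one real gap in the proposal. (The paper's own proof simply writes ``the proposition holds'' at this point, so it is no more explicit than you are; but the honest fix is your set equality, not the retreat to nonemptiness.)

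Your worry about the star case comes from a misreading of the iteration. In the paper's definition of \(\mathcal{L}^{\underline{n}}\) the set \(\mathcal{L}=\mathsf{L}^{\mathrm{ctxt}}(F)\) is \emph{fixed}: for \(n\geq 1\) one takes \((L_2,\mathrm{ctxt}_2)\in \mathcal{L}^{\underline{n-1}}\), not \((\mathsf{L}^{\mathrm{ctxt}_1}(F))^{\underline{n-1}}\) as you wrote, so the output context is always the one produced by the base case. With the paper's reading that \(\mathcal{L}^{\underline{0}}\) contributes \((\{\varepsilon\},\mathrm{ctxt})\) (this is what its proof of the star case uses), an inner induction on \(n\) gives \(\mathrm{ctxt}'=\mathrm{ctxt}\) for every \((L,\mathrm{ctxt}')\in\mathsf{L}^{\mathrm{ctxt}}(F^*)\), and your set equality holds for star as well. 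Hence no asymmetric bookkeeping is needed: keep the set-equality invariant uniformly, and the catenation and capture cases go through exactly as you sketched them before the star digression.
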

\begin{proof}
  By induction over the structure of \(E\):
  \begin{itemize}
    \item If \(E = a \in \Sigma\) or \(E = \emptyset \), the property holds since \(\mathrm{Null}^{\mathrm{ctxt}}(E)\) is empty
          and since there is no couple \((L, \mathrm{ctxt'})\) in \(\mathsf{L}^{\mathrm{ctxt}}(E)\) with \( \varepsilon \) in \(L\).
    \item If \(E = \varepsilon \), the following two conditions hold,
          \begin{align*}
            \mathrm{Null}^{\mathrm{ctxt}}(E)
             & =  \{\mathrm{ctxt}\},                    &
            \mathsf{L}^{\mathrm{ctxt}}(E)
             & =  \{(\{\varepsilon\}, \mathrm{ctxt})\},
          \end{align*}
          satisfying the stated condition.
    \item If \(E = F + G\), the following two conditions hold:
          \begin{align*}
            \mathrm{Null}^{\mathrm{ctxt}}(F+G)
                                            & = \mathrm{Null}^{\mathrm{ctxt}}(F) \cup \mathrm{Null}^{\mathrm{ctxt}}(G), &
            \mathsf{L}^{\mathrm{ctxt}}(F+G) & = \mathsf{L}^{\mathrm{ctxt}}(F) \cup \mathsf{L}^{\mathrm{ctxt}}(G).
          \end{align*}
          Since, by induction hypothesis, the following two conditions hold
          \begin{align*}
            \mathrm{Null}^{\mathrm{ctxt}}(F) \neq \emptyset \Leftrightarrow \exists (L, \mathrm{ctxt'}) \in \mathsf{L}^{\mathrm{ctxt}}(F) \mid \varepsilon \in L, \\
            \mathrm{Null}^{\mathrm{ctxt}}(G) \neq \emptyset \Leftrightarrow \exists (L, \mathrm{ctxt'}) \in \mathsf{L}^{\mathrm{ctxt}}(G) \mid \varepsilon \in L,
          \end{align*}
          the proposition holds.
    \item If \(E = F \cdot G\), the two following conditions hold:
          \begin{align*}
            \mathrm{Null}^{\mathrm{ctxt}}(F \cdot G)
             & = \bigcup_{
            \substack{
            \mathrm{ctxt}'\in \mathrm{Null}^{\mathrm{ctxt}}(F),   \\
            \mathrm{ctxt}''\in \mathrm{Null}^{\mathrm{ctxt}'}(G),
            }
            }  \{\mathrm{ctxt}''\},                               \\
            \mathsf{L}^{\mathrm{ctxt}}(F \cdot G)
             & = \bigcup_{
            \substack{
            (L, \mathrm{ctxt}')\in \mathsf{L}^{\mathrm{ctxt}}(F), \\
            (L', \mathrm{ctxt}'')\in \mathsf{L}^{\mathrm{ctxt}'}(G),
            }
            }  \{(L \cdot L', \mathrm{ctxt}'')\}.
          \end{align*}
          Since, by induction hypothesis, the two following conditions hold,
          \begin{align*}
            \mathrm{Null}^{\mathrm{ctxt}}(F) \neq \emptyset \Leftrightarrow \exists (L, \mathrm{ctxt'}) \in \mathsf{L}^{\mathrm{ctxt}}(F) \mid \varepsilon \in L, \\
            \mathrm{Null}^{\mathrm{ctxt}'}(G) \neq \emptyset \Leftrightarrow \exists (L, \mathrm{ctxt}'') \in \mathsf{L}^{\mathrm{ctxt}'}(G) \mid \varepsilon \in L,
          \end{align*}
          the proposition holds.
    \item If \(E = F^*\), since the two following conditions hold
          \begin{align*}
            \mathrm{Null}^{\mathrm{ctxt}}(F^*)
                                                            & =  \{\mathrm{ctxt}\},                                                        &
            {\mathsf{L}^{\mathrm{ctxt}}(F)}^{\underline{0}} & =  \{(\{\varepsilon\}, \mathrm{ctxt})\} \in \mathsf{L}^{\mathrm{ctxt}}(F^*),
          \end{align*}
          the stated condition holds.
    \item If \(E = {(F)}^u_x\), both following conditions hold:
          \begin{align*}
            \mathrm{Null}^{\mathrm{ctxt}}({(F)}^u_x)
             & = \bigcup_{\mathrm{ctxt}'\in \mathrm{Null}^{\mathrm{ctxt}}(F)}   \{ {[\mathrm{ctxt}']}_{x \leftarrow u} \}, \\
            \mathsf{L}^{\mathrm{ctxt}}({(F)}^u_x)
             & =
            \bigcup_{
            \substack{
            (L, {\mathrm{ctxt}'})\in \mathsf{L}^{\mathrm{ctxt}}(F),                                                        \\
            w\in L
            }
            } \{(\{w\}, {[{\mathrm{ctxt}'}]}_{x \leftarrow uw})\}.
          \end{align*}
          Then, following induction hypothesis,
          \begin{align*}
            \mathrm{Null}^{\mathrm{ctxt}}(F) \neq \emptyset \Leftrightarrow \exists (L, \mathrm{ctxt'}) \in \mathsf{L}^{\mathrm{ctxt}}(F) \mid \varepsilon \in L,
          \end{align*}
          the stated condition holds.

    \item If \(E = x \), both following conditions hold:
          \begin{align*}
            \mathrm{Null}^{\mathrm{ctxt}}(x) & =
            \begin{cases}
              \{\mathrm{ctxt} \} & \text{ if } \mathrm{ctxt}(x) = \mathtt{Just}(\varepsilon) \\
              \emptyset          & \text{ otherwise,}
            \end{cases}           \\
            \mathsf{L}^{\mathrm{ctxt}}(x)    & =
            \begin{cases}
              \emptyset                  & \text{if } \mathrm{ctxt}(x) = \mathrm{Nothing},           \\
              \{(\{w\}, \mathrm{ctxt})\} & \text{otherwise if } \mathrm{ctxt}(x) = \mathrm{Just}(w). \\
            \end{cases}
          \end{align*}
          Therefore, the proposition holds.
  \end{itemize}
\end{proof}

\subsection{Derivation formulae}
Similarly to the nullability predicate, the
derivation computation builds the context while parsing the expression.
Therefore, the derivative of an expression with respect to a context is a set of couples (expression, context),
inductively computed as follows, for any \(\Sigma,\Gamma \)-expression and for any context \(\mathrm{ctxt}\) in \(\mathrm{Ctxt}(\Gamma, \Sigma)\):
\begin{align*}
  \mathrm{d}^{\mathrm{ctxt}}_a(\varepsilon)
                                                                                                       & = \emptyset                                                            &
  \mathrm{d}^{\mathrm{ctxt}}_a(\emptyset)
                                                                                                       & = \emptyset                                                                              \\
  \mathrm{d}^{\mathrm{ctxt}}_a(b)                                                                      & =
  \begin{cases}
    \emptyset                        & \text {if } a \neq b, \\
    \{(\varepsilon, \mathrm{ctxt})\} & \text{ otherwise,}
  \end{cases}                                                                           &
  \mathrm{d}^{\mathrm{ctxt}}_a(x)                                                                      & =
  \begin{cases}
    \mathrm{d}^{\mathrm{ctxt}}_a(w) & \text{ if } \mathrm{ctxt}(x) = \mathtt{Just}(w) \\
    \emptyset                       & \text{ otherwise }
  \end{cases}                                                                                                                                                                      \\
  \mathrm{d}^{\mathrm{ctxt}}_a(F+G)                                                                    & = \mathrm{d}^{\mathrm{ctxt}}_a(F) \cup \mathrm{d}^{\mathrm{ctxt}}_a(G) &
  \mathrm{d}^{\mathrm{ctxt}}_a(F\cdot G)                                                               & =
  \bigcup_{(\mathrm{ctxt}', F') \in \mathrm{d}^{\mathrm{ctxt}}_a(F)} \{( F'\cdot G, \mathrm{ctxt}')\}                                                                                             \\
                                                                                                       &                                                                        &   & \qquad \cup
  \bigcup_{\mathrm{ctxt}' \in \mathrm{Null}^{\mathrm{ctxt}}(F)} \mathrm{d}^{\mathrm{ctxt}'}_a(G)                                                                                                  \\
  \mathrm{d}^{\mathrm{ctxt}}_a(F^*)                                                                    & =
  \bigcup_{(\mathrm{ctxt}', F') \in \mathrm{d}^{\mathrm{ctxt}}_a(F)} \{(F'\cdot F^*, \mathrm{ctxt}')\} &
  \mathrm{d}^{\mathrm{ctxt}}_a({(F)}_x^u)                                                              & =
  \bigcup_{(\mathrm{ctxt}', F') \in \mathrm{d}^{\mathrm{ctxt}}_a(F)} \{({(F')}_x^{u\cdot a}, \mathrm{ctxt}')\}
\end{align*}
where \(F\) and \(G\) are two \(\Sigma,\Gamma \)-expressions, \(a\) is a symbol in \(\Sigma \),
\(x\) is a variable in \(\Gamma \) and \(u\) is in \(\Sigma^*\).

\begin{example}\label{ex calc der capt group}
  Let us consider the three expressions of Example~\ref{ex lang capt group}:
  \begin{align*}
    E & = E_1 \cdot E_2, & E_1 & = {({{(a^*)}_x}bx)}^*, & E_2 & = cx.
  \end{align*}
  Then, for any context \(\mathrm{ctxt}\),
  \begin{align*}
    \mathrm{d}^{\mathrm{ctxt}}_a(E) & = \{({(a^*)}_x^{a}bx {({(a^*)}_{x}bx)}^*cx, \mathrm{ctxt})\},       \\
    \mathrm{d}^{\mathrm{ctxt}}_b(E) & = \{(x {({(a^*)}_{x}bx)}^*cx, \lambda x \rightarrow \varepsilon)\}, \\
    \mathrm{d}^{\mathrm{ctxt}}_c(E) & = \{(x, \mathrm{ctxt})\}.
  \end{align*}
\end{example}

The derivation of an expression allows us to syntactically express the computation of the quotient of
the language components in contextual languages, where the quotient \(w^{-1}(L)\) is the set \(\{w' \mid ww' \in L\}\).
\begin{proposition}
  Let \(E\) be a \(\Sigma,\Gamma \)-expression, \(\mathrm{ctxt}\) be a context in \(\mathrm{Ctxt}(\Gamma, \Sigma)\)
  and \(a\) be a symbol in \(\Sigma \).
  Then:
  \begin{equation*}
    \bigcup_{(E', \mathrm{ctxt}') \in d^{\mathrm{ctxt}}_a(E)} \mathsf{L}^{\mathrm{ctxt}'}(E')
    =
    \bigcup_{(L', \mathrm{ctxt}') \in \mathsf{L}^{\mathrm{ctxt}}(E)} \{(a^{-1}(L'), \mathrm{ctxt}')\}
  \end{equation*}
\end{proposition}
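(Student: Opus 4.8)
The plan is to argue by induction over the structure of \(E\), reading the claimed equality in \(2^{\Sigma^*}\times\mathrm{Ctxt}(\Gamma,\Sigma)\) up to couples whose language component is empty (so that, for instance, the right-hand couple \((a^{-1}(\{b\}),\mathrm{ctxt})=(\emptyset,\mathrm{ctxt})\) for \(b\neq a\) is immaterial). The only external ingredients are the routine quotient identities \(a^{-1}(L\cup L')=a^{-1}(L)\cup a^{-1}(L')\), \(a^{-1}(\{a\})=\{\varepsilon\}\), \(a^{-1}(\{b\})=a^{-1}(\{\varepsilon\})=\emptyset\) for \(b\neq a\), and the catenation rule \(a^{-1}(L_1\cdot L_2)=a^{-1}(L_1)\cdot L_2\) if \(\varepsilon\notin L_1\) and \(a^{-1}(L_1\cdot L_2)=a^{-1}(L_1)\cdot L_2\cup a^{-1}(L_2)\) if \(\varepsilon\in L_1\); together with a pointwise strengthening of Proposition~\ref{prop null capt group}, visible from its proof in every case but the star, namely that \(\mathrm{Null}^{\mathrm{ctxt}}(E)\) equals \(\{\mathrm{ctxt}'\mid\exists (L,\mathrm{ctxt}')\in\mathsf{L}^{\mathrm{ctxt}}(E),\ \varepsilon\in L\}\). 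This pointwise form is what lets the context bookkeeping match on both sides.

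The base cases (\(\varepsilon\), \(\emptyset\), a symbol \(a\) or \(b\neq a\)) and the sum case are immediate by unfolding the definitions of \(\mathrm{d}^{\mathrm{ctxt}}_a\) and \(\mathsf{L}^{\mathrm{ctxt}}\) and using the quotient identities above; in all of them the context is untouched. For a variable \(x\), both sides vanish when \(\mathrm{ctxt}(x)=\mathrm{Nothing}\), and when \(\mathrm{ctxt}(x)=\mathtt{Just}(w)\) the claim for \(x\) reduces to the same claim for the word \(w\) seen as the expression \(a_1\cdots a_{|w|}\). Since a word is not a structural subterm of \(x\), I would first establish the statement for every \(w\in\Sigma^*\) by a side induction on \(|w|\) (one symbol is a base case, a longer word is a catenation), and then invoke it here.

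The catenation case \(E=F\cdot G\) carries the first real content. Unfold \(\mathsf{L}^{\mathrm{ctxt}}(F\cdot G)\), apply \(a^{-1}\) to each product \(L_1\cdot L_2\), and split according to whether \(\varepsilon\in L_1\). For the part \(a^{-1}(L_1)\cdot L_2\): by the induction hypothesis on \(F\), \(\bigcup_{(F',\mathrm{ctxt}')\in \mathrm{d}^{\mathrm{ctxt}}_a(F)}\mathsf{L}^{\mathrm{ctxt}'}(F')\) equals \(\bigcup_{(L_1,\mathrm{ctxt}_1)\in\mathsf{L}^{\mathrm{ctxt}}(F)}\{(a^{-1}(L_1),\mathrm{ctxt}_1)\}\); threading \(G\) through the resulting contexts exactly as in the definition of \(\mathsf{L}^{\mathrm{ctxt}'}(F'\cdot G)\) recovers \(\bigcup_{(F',\mathrm{ctxt}')\in\mathrm{d}^{\mathrm{ctxt}}_a(F)}\{(F'\cdot G,\mathrm{ctxt}')\}\). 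For the part \(a^{-1}(L_2)\) (present exactly when \(\varepsilon\in L_1\)): use the pointwise form of Proposition~\ref{prop null capt group} to replace \(\{\mathrm{ctxt}_1\mid(L_1,\mathrm{ctxt}_1)\in\mathsf{L}^{\mathrm{ctxt}}(F),\ \varepsilon\in L_1\}\) by \(\mathrm{Null}^{\mathrm{ctxt}}(F)\), then the induction hypothesis on \(G\) started from each such \(\mathrm{ctxt}_1\), recovering \(\bigcup_{\mathrm{ctxt}'\in\mathrm{Null}^{\mathrm{ctxt}}(F)}\mathrm{d}^{\mathrm{ctxt}'}_a(G)\). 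The captured-group case \(E={(F)}_x^u\) is analogous but lighter: since \(\mathrm{d}^{\mathrm{ctxt}}_a({(F)}_x^u)\) only rewraps each derivative \(F'\) of \(F\) as \({(F')}_x^{u\cdot a}\), the induction hypothesis on \(F\) shows that the pairs \((M,\mathrm{ctxt}')\) ranged over by \(\mathsf{L}^{\mathrm{ctxt}'}(F')\) are exactly the \((a^{-1}(L_1),\mathrm{ctxt}_1)\) for \((L_1,\mathrm{ctxt}_1)\in\mathsf{L}^{\mathrm{ctxt}}(F)\); since \(w\in a^{-1}(L_1)\) iff \(aw\in L_1\) and the recorded value becomes \(x\leftarrow u\cdot a\cdot w\) on both sides, the two families coincide.

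The star case \(E=F^*\) is the main obstacle. I would reduce it via the lemma \(\mathsf{L}^{\mathrm{ctxt}}(F^*)=\mathsf{L}^{\mathrm{ctxt}}(\varepsilon+F\cdot F^*)\): the right-hand side becomes \(a^{-1}\) of \(\mathsf{L}^{\mathrm{ctxt}}(\varepsilon)\), which is negligible, together with \(a^{-1}\) of \(\mathsf{L}^{\mathrm{ctxt}}(F\cdot F^*)\), to which the already-proved catenation case applies. The delicate point is that this must then be matched with the one-step unfolding \(\mathrm{d}^{\mathrm{ctxt}}_a(F^*)=\bigcup_{(F',\mathrm{ctxt}')\in\mathrm{d}^{\mathrm{ctxt}}_a(F)}\{(F'\cdot F^*,\mathrm{ctxt}')\}\), which has no analogue of the second union appearing in \(\mathrm{d}^{\mathrm{ctxt}}_a(F\cdot F^*)\): one has to argue that the letter \(a\) in any run of \(F^*\) on \(aw\) can be charged to a single iteration of \(F\), and that the contexts produced by the \(\varepsilon\)-yielding iterations preceding it are consistently accounted for (using \(\mathrm{Null}^{\mathrm{ctxt}}(F^*)=\{\mathrm{ctxt}\}\) and the POSIX convention on which capture is retained, in particular that empty iterations are not re-entered). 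Controlling this interaction between \(\varepsilon\)-producing iterations and context updates --- rather than the bookkeeping in the other cases --- is where the work sits; I would organize it as an auxiliary induction on the number of iterations, reusing the catenation case at each step, and this is also the point at which any restriction on captures occurring inside nullable loops would enter.
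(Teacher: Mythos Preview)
Your plan matches the paper's proof: structural induction on \(E\), with pairs whose language component is empty identified with the empty set. The paper handles the variable case by a direct case split on the first letter of the stored word rather than your side induction, and in the catenation case it silently uses exactly the pointwise form of Proposition~\ref{prop null capt group} that you single out, passing in one line from \(\mathrm{ctxt}_1\in\mathrm{Null}^{\mathrm{ctxt}}(F)\) to \(\exists(L,\mathrm{ctxt}_1)\in\mathsf{L}^{\mathrm{ctxt}}(F)\) with \(\varepsilon\in L\).

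For the star the paper's organisation differs slightly from yours: rather than invoking the already-established catenation case on \(F\cdot F^*\), it unfolds \(\mathsf{L}^{\mathrm{ctxt}'}(F'\cdot F^*)\) directly, applies the induction hypothesis on \(F\), and only afterwards appeals to the lemma \(\mathsf{L}^{\mathrm{ctxt}}(F^*)=\mathsf{L}^{\mathrm{ctxt}}(\varepsilon+F\cdot F^*)\). The step it writes without comment,
\[
\bigcup_{\substack{(L_1,\mathrm{ctxt}_1)\in\mathsf{L}^{\mathrm{ctxt}}(F)\\(L_2,\mathrm{ctxt}_2)\in\mathsf{L}^{\mathrm{ctxt}_1}(F^*)}}\{(a^{-1}(L_1)\cdot L_2,\mathrm{ctxt}_2)\}
=
\bigcup_{\substack{(L_1,\mathrm{ctxt}_1)\in\mathsf{L}^{\mathrm{ctxt}}(F)\\(L_2,\mathrm{ctxt}_2)\in\mathsf{L}^{\mathrm{ctxt}_1}(F^*)}}\{(a^{-1}(L_1\cdot L_2),\mathrm{ctxt}_2)\},
\]
is precisely the point you flag as the main obstacle (the possibly missing \(a^{-1}(L_2)\) when \(\varepsilon\in L_1\), together with the contexts produced by preceding empty iterations). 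So your caution there is well-placed: you are making explicit what the paper leaves tacit, and your proposed auxiliary induction on the number of iterations is a reasonable way to justify that equality.
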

\begin{proof}
  By induction over the structure of \(E\), assimilating \(\emptyset\) and \(\{(\emptyset, \mathrm{ctxt})\}\) for any context \(\mathrm{ctxt}\).
  \begin{itemize}
    \item If \(E = \varepsilon \) or \(E = \emptyset \), the property vacuously holds.
    \item If \(E = b \in \Sigma \),
          \begin{align*}
            \bigcup_{(E', \mathrm{ctxt}') \in d^{\mathrm{ctxt}}_a(b)} \mathsf{L}^{\mathrm{ctxt}'}(E')
             & =
            \begin{cases}
              \emptyset                            & \text{ if } b\neq a, \\
              \{(\{\varepsilon\}, \mathrm{ctxt})\} & \text{ otherwise,}
            \end{cases} \\
             & =
            \{(a^{-1}(\{b\}), \mathrm{ctxt})\}
            =
            \bigcup_{(L', \mathrm{ctxt}') \in \mathsf{L}^{\mathrm{ctxt}}(b)} \{(a^{-1}(L'), \mathrm{ctxt}')\}
            .
          \end{align*}
    \item If \(E = F + G\),
          \begin{align*}
            \bigcup_{(E', \mathrm{ctxt}') \in d^{\mathrm{ctxt}}_a(F + G)} \mathsf{L}^{\mathrm{ctxt}'}(E')
             & =
            \bigcup_{(E', \mathrm{ctxt}') \in d^{\mathrm{ctxt}}_a(F)\cup d^{\mathrm{ctxt}}_a(G)} \mathsf{L}^{\mathrm{ctxt}'}(E')                 \\
             & =
            \bigcup_{(E', \mathrm{ctxt}') \in d^{\mathrm{ctxt}}_a(F)} \mathsf{L}^{\mathrm{ctxt}'}(E')
            \cup
            \bigcup_{(E', \mathrm{ctxt}') \in d^{\mathrm{ctxt}}_a(G)} \mathsf{L}^{\mathrm{ctxt}'}(E')                                            \\
             & =
            \bigcup_{(L', \mathrm{ctxt}') \in \mathsf{L}^{\mathrm{ctxt}}(F)} \{(a^{-1}(L'), \mathrm{ctxt}')\}
            \cup
            \bigcup_{(L', \mathrm{ctxt}') \in \mathsf{L}^{\mathrm{ctxt}}(G)} \{(a^{-1}(L'), \mathrm{ctxt}')\}                                    \\
             & =
            \bigcup_{(L', \mathrm{ctxt}') \in \mathsf{L}^{\mathrm{ctxt}}(F) \cup \mathsf{L}^{\mathrm{ctxt}}(G)} \{(a^{-1}(L'), \mathrm{ctxt}')\} \\
             & =
            \bigcup_{(L', \mathrm{ctxt}') \in \mathsf{L}^{\mathrm{ctxt}}(F+G)} \{(a^{-1}(L'), \mathrm{ctxt}')\}  .
          \end{align*}
    \item If \(E = F \cdot G\),
          \begin{align*}
            \bigcup_{(E', \mathrm{ctxt}') \in d^{\mathrm{ctxt}}_a(F \cdot G)} \mathsf{L}^{\mathrm{ctxt}'}(E')
            %  & =
            % \bigcup_{(E', \mathrm{ctxt}') \in \bigcup\limits_{(\mathrm{ctxt}', F') \in \mathrm{d}^{\mathrm{ctxt}}_a(F)} \{(\mathrm{ctxt}', F'\cdot G)\}} \mathsf{L}^{\mathrm{ctxt}'}(E')
            % \cup
            % \bigcup_{(E', \mathrm{ctxt}') \in   \bigcup\limits_{\mathrm{ctxt}' \in \mathrm{Null}^{\mathrm{ctxt}}(F)} \mathrm{d}^{\mathrm{ctxt}'}_a(G)} \mathsf{L}^{\mathrm{ctxt}'}(E') \\
             & =
            \bigcup_{(\mathrm{ctxt}', F') \in \mathrm{d}^{\mathrm{ctxt}}_a(F)}  \mathsf{L}^{\mathrm{ctxt}'}(F' \cdot G)  \cup
            \bigcup_{
            \substack{
            \mathrm{ctxt}' \in \mathrm{Null}^{\mathrm{ctxt}}(F),                                                 \\
            (G', \mathrm{ctxt}'') \in \mathrm{d}^{\mathrm{ctxt}'}_a(G)
            }
            }  \mathsf{L}^{\mathrm{ctxt}''}(G')                                                                  \\
             & =
            \bigcup_{
            \substack{
            (\mathrm{ctxt}', F') \in \mathrm{d}^{\mathrm{ctxt}}_a(F),                                            \\
            (L_1, {\mathrm{ctxt}}_1)\in \mathsf{L}^{\mathrm{ctxt}}(F'),                                          \\
            (L_2, {\mathrm{ctxt}}_2)\in \mathsf{L}^{{\mathrm{ctxt}}_1}(G)
            }
            }
            \{(L_1\cdot L_2, {\mathrm{ctxt}}_2) \}
            \cup
            \bigcup_{
            \substack{
            \mathrm{ctxt}' \in \mathrm{Null}^{\mathrm{ctxt}}(F),                                                 \\
            (G', \mathrm{ctxt}'') \in \mathrm{d}^{\mathrm{ctxt}'}_a(G)
            }
            }  \mathsf{L}^{\mathrm{ctxt}''}(G')
            \\
             & =
            \bigcup_{\substack{(L_1, {\mathrm{ctxt}}_1)\in \mathsf{L}^{\mathrm{ctxt}}(F),                        \\ (L_2, {\mathrm{ctxt}}_2)\in \mathsf{L}^{{\mathrm{ctxt}}_1}(G) }}  \{(a^{-1}(L_1)\cdot L_2, \mathrm{ctxt}_2)\}
            \cup
            \bigcup_{\substack{
            \mathrm{ctxt}_1 \in \mathrm{Null}^{\mathrm{ctxt}}(F),                                                \\
            (L_2, {\mathrm{ctxt}}_2)\in \mathsf{L}^{{\mathrm{ctxt}}_1}(G)
            }}  \{(a^{-1}(L_2), \mathrm{ctxt}_2)\}                                                               \\
             & =
            \bigcup_{\substack{(L_1, {\mathrm{ctxt}}_1)\in \mathsf{L}^{\mathrm{ctxt}}(F),                        \\ (L_2, {\mathrm{ctxt}}_2)\in \mathsf{L}^{{\mathrm{ctxt}}_1}(G) }}  \{(a^{-1}(L_1)\cdot L_2, \mathrm{ctxt}_2)\}
            \cup
            \bigcup_{\substack{
            \exists (L, {\mathrm{ctxt}}_1)\in \mathsf{L}^{\mathrm{ctxt}}(F) \mid \varepsilon \in L,              \\
            (L_2, {\mathrm{ctxt}}_2)\in \mathsf{L}^{{\mathrm{ctxt}}_1}(G)
            }}  \{(a^{-1}(L_2), \mathrm{ctxt}_2)\}                                                               \\
             & =
            \bigcup_{\substack{(L_1, {\mathrm{ctxt}}_1)\in \mathsf{L}^{\mathrm{ctxt}}(F),                        \\ (L_2, {\mathrm{ctxt}}_2)\in \mathsf{L}^{{\mathrm{ctxt}}_1}(G) }}  \{(a^{-1}(L_1)\cdot L_2, \mathrm{ctxt}_2)\}
            \cup
            \bigcup_{\substack{(L_1, {\mathrm{ctxt}}_1)\in \mathsf{L}^{\mathrm{ctxt}}(F),                        \\
            \varepsilon \in L_1,                                                                                 \\
            (L_2, {\mathrm{ctxt}}_2)\in \mathsf{L}^{{\mathrm{ctxt}}_1}(G) }}  \{(a^{-1}(L_2), \mathrm{ctxt}_2)\} \\
             & =
            \bigcup_{\substack{(L_1, {\mathrm{ctxt}}_1)\in \mathsf{L}^{\mathrm{ctxt}}(F),                        \\ (L_2, {\mathrm{ctxt}}_2)\in \mathsf{L}^{{\mathrm{ctxt}}_1}(G) }}  \{(a^{-1}(L_1\cdot L_2), \mathrm{ctxt}_2)\}  \\
            %  & =
            % \bigcup_{(L', \mathrm{ctxt}') \in \bigcup\limits_{\substack{(L_1, {\mathrm{ctxt}}_1)\in \mathsf{L}^{\mathrm{ctxt}}(F), \\ (L_2, {\mathrm{ctxt}}_2)\in \mathsf{L}^{{\mathrm{ctxt}}_1}(G) }} \{(L_1\cdot L_2, {\mathrm{ctxt}}_2) \}} \{(a^{-1}(L'), \mathrm{ctxt}')\}  \\
             & =
            \bigcup_{(L', \mathrm{ctxt}') \in \mathsf{L}^{\mathrm{ctxt}}(F \cdot G)} \{(a^{-1}(L'), \mathrm{ctxt}')\}  .
          \end{align*}
    \item If \(E = F^*\),
          \begin{align*}
            \bigcup_{(E', \mathrm{ctxt}') \in d^{\mathrm{ctxt}}_a(F^*)} \mathsf{L}^{\mathrm{ctxt}'}(E')
            %  & =
            % \bigcup_{(E', \mathrm{ctxt}') \in   \bigcup\limits_{(\mathrm{ctxt}', F') \in \mathrm{d}^{\mathrm{ctxt}}_a(F)} \{(\mathrm{ctxt}', F'\cdot F^*)\}} \mathsf{L}^{\mathrm{ctxt}'}(E') \\
             & =
            \bigcup_{(\mathrm{ctxt}', F') \in \mathrm{d}^{\mathrm{ctxt}}_a(F)}  \mathsf{L}^{\mathrm{ctxt}'}(F' \cdot F^*)             \\
             & =
            \bigcup_{
            \substack{
            (\mathrm{ctxt}', F') \in \mathrm{d}^{\mathrm{ctxt}}_a(F),                                                                 \\
            (L_1, {\mathrm{ctxt}}_1)\in \mathsf{L}^{\mathrm{ctxt}}(F'),                                                               \\
            (L_2, {\mathrm{ctxt}}_2)\in \mathsf{L}^{{\mathrm{ctxt}}_1}(F^*)
            }
            }
            \{(L_1\cdot L_2, {\mathrm{ctxt}}_2) \}
            \\
             & =
            \bigcup_{\substack{(L_1, {\mathrm{ctxt}}_1)\in \mathsf{L}^{\mathrm{ctxt}}(F),
            \\
            (L_2, {\mathrm{ctxt}}_2)\in \mathsf{L}^{{\mathrm{ctxt}}_1}(F^*) }}  \{(a^{-1}(L_1)\cdot L_2, \mathrm{ctxt}_2)\}           \\
             & =
            \bigcup_{\substack{(L_1, {\mathrm{ctxt}}_1)\in \mathsf{L}^{\mathrm{ctxt}}(F),                                             \\ (L_2, {\mathrm{ctxt}}_2)\in \mathsf{L}^{{\mathrm{ctxt}}_1}(F^*) }}  \{(a^{-1}(L_1\cdot L_2), \mathrm{ctxt}_2)\}  \\
            %  & =
            % \bigcup_{(L', \mathrm{ctxt}') \in \bigcup\limits_{\substack{(L_1, {\mathrm{ctxt}}_1)\in \mathsf{L}^{\mathrm{ctxt}}(F),                                                           \\ (L_2, {\mathrm{ctxt}}_2)\in \mathsf{L}^{{\mathrm{ctxt}}_1}(F^*) }} \{(L_1\cdot L_2, {\mathrm{ctxt}}_2) \}} \{(a^{-1}(L'), \mathrm{ctxt}')\}  \\
             & =
            \bigcup_{(L', \mathrm{ctxt}') \in \mathsf{L}^{\mathrm{ctxt}}(F \cdot F^*)} \{(a^{-1}(L'), \mathrm{ctxt}')\}               \\
             & =
            \bigcup_{(L', \mathrm{ctxt}') \in \mathsf{L}^{\mathrm{ctxt}}(\varepsilon + F \cdot F^*)} \{(a^{-1}(L'), \mathrm{ctxt}')\} \\
             & =
            \bigcup_{(L', \mathrm{ctxt}') \in \mathsf{L}^{\mathrm{ctxt}}(F^*)} \{(a^{-1}(L'), \mathrm{ctxt}')\}
          \end{align*}

    \item If \(E = {(F)}_x^u\),
          \begin{align*}
            \bigcup_{(E', \mathrm{ctxt}') \in d^{\mathrm{ctxt}}_a({(F)}_x^u)} \mathsf{L}^{\mathrm{ctxt}'}(E')
            %  & =
            % \bigcup_{(E', \mathrm{ctxt}') \in  \bigcup\limits_{(\mathrm{ctxt}', F') \in \mathrm{d}^{\mathrm{ctxt}}_a(F)} \{(\mathrm{ctxt}', {(F')}_x^{u\cdot a})\}} \mathsf{L}^{\mathrm{ctxt}'}(E') \\
             & =
            \bigcup_{(\mathrm{ctxt}', F') \in \mathrm{d}^{\mathrm{ctxt}}_a(F)}  \mathsf{L}^{\mathrm{ctxt}'}({(F')}_x^{u\cdot a}) \\
            %  & =
            % \bigcup_{(\mathrm{ctxt}', F') \in \mathrm{d}^{\mathrm{ctxt}}_a(F)}  \bigcup_{\substack{(L_1, {\mathrm{ctxt}}_1)\in \mathsf{L}^{\mathrm{ctxt}'}(F'), \\ w\in L_1}} \{(\{w\}, {[{\mathrm{ctxt}}_1]}_{x \leftarrow uaw})\}                                                                           \\
             & =
            \bigcup_{\substack{(\mathrm{ctxt}', F') \in \mathrm{d}^{\mathrm{ctxt}}_a(F)                                          \\ (L_1, {\mathrm{ctxt}}_1)\in \mathsf{L}^{\mathrm{ctxt}'}(F'),                                              \\ w\in L_1}} \{(\{w\}, {[{\mathrm{ctxt}}_1]}_{x \leftarrow uaw})\}                                                                           \\                                                                                                                                                                                                                                                                                                                        & =
            \bigcup_{\substack{(L_1, {\mathrm{ctxt}}_1)\in \mathsf{L}^{\mathrm{ctxt}}(F),                                        \\ w\in a^{-1}(L_1)}}  \{(\{w\}, {[{\mathrm{ctxt}}_1]}_{x \leftarrow uaw})\}  \\
             & =
            \bigcup_{\substack{(L_1, {\mathrm{ctxt}}_1)\in \mathsf{L}^{\mathrm{ctxt}}(F),                                        \\ aw\in L_1}}  \{(\{w\}, {[{\mathrm{ctxt}}_1]}_{x \leftarrow uaw})\}  \\
             & =
            \bigcup_{\substack{(L_1, {\mathrm{ctxt}}_1)\in \mathsf{L}^{\mathrm{ctxt}}(F),                                        \\ aw\in L_1}}  \{(a^{-1}(\{aw\}), {[{\mathrm{ctxt}}_1]}_{x \leftarrow uaw})\}  \\
             & =
            \bigcup_{\substack{(L_1, {\mathrm{ctxt}}_1)\in \mathsf{L}^{\mathrm{ctxt}}(F),                                        \\ w\in L_1}}  \{(a^{-1}(\{w\}), {[{\mathrm{ctxt}}_1]}_{x \leftarrow uw})\}  \\
            %  & =
            % \bigcup_{(L', \mathrm{ctxt}') \in \bigcup\limits_{\substack{(L_1, {\mathrm{ctxt}}_1)\in \mathsf{L}^{\mathrm{ctxt}}(F),                                                                  \\ w\in L_1}} \{(\{w\}, {[{\mathrm{ctxt}}_1]}_{x \leftarrow uw})\}} \{(a^{-1}(L'), \mathrm{ctxt}')\}  \\
             & =
            \bigcup_{(L', \mathrm{ctxt}') \in \mathsf{L}^{\mathrm{ctxt}}({(F)}_x^u)} \{(a^{-1}(L'), \mathrm{ctxt}')\}
          \end{align*}

    \item If \(E = x\),
          \begin{align*}
            \bigcup_{(E', \mathrm{ctxt}') \in d^{\mathrm{ctxt}}_a(x)} \mathsf{L}^{\mathrm{ctxt}'}(E')
             & =
            \begin{cases}
              \displaystyle \bigcup_{(E', \mathrm{ctxt}') \in d^{\mathrm{ctxt}}_a(w)} \mathsf{L}^{\mathrm{ctxt}'}(E') & \text{ if } \mathrm{ctxt}(x) = \mathrm{Just}(w), \\
              \emptyset                                                                                               & \text{ otherwise,}
            \end{cases}
            \\                                                                                                                                                                                                                                                                                                                      & =
                                                                                                                                                                                                                                                                                                                                   \begin{cases}
              \displaystyle \bigcup_{(w, \mathrm{ctxt}) \in d^{\mathrm{ctxt}}_a(aw)} \mathsf{L}^{\mathrm{ctxt}}(w) & \text{ if } \mathrm{ctxt}(x) = \mathrm{Just}(aw), \\
              \emptyset                                                                                            & \text{ otherwise,}
            \end{cases}
            \\
             & =
            \begin{cases}
              \{(\{w\}, \mathrm{ctxt})\} & \text{ if } \mathrm{ctxt}(x) = \mathrm{Just}(aw), \\
              \emptyset                  & \text{ otherwise,}
            \end{cases}
            \\
             & =
            \begin{cases}
              \{(a^{-1}(\{aw\}), \mathrm{ctxt})\} & \text{ if } \mathrm{ctxt}(x) = \mathrm{Just}(aw), \\
              \emptyset                           & \text{ otherwise,}
            \end{cases}
            \\
             & =
            \begin{cases}
              \{(a^{-1}(\{w\}), \mathrm{ctxt})\} & \text{ if } \mathrm{ctxt}(x) = \mathrm{Just}(w), \\
              \emptyset                          & \text{ otherwise,}
            \end{cases}
            \\
             & =
            \bigcup_{(L', \mathrm{ctxt}') \in \mathsf{L}^{\mathrm{ctxt}}(x)} \{(a^{-1}(L'), \mathrm{ctxt}')\}
          \end{align*}
  \end{itemize}
\end{proof}

The derivation w.r.t.\ a word is, as usual, an iterated application of the derivation w.r.t.\ a symbol,
recursively defined as follows, for any \(\Sigma,\Gamma \)-expression \(E\), for any context \(\mathrm{ctxt}\) in \(\mathrm{Ctxt}(\Gamma, \Sigma)\), for any symbol \(a\) in \(\Sigma\) and for any word \(v\) in \(\Sigma^* \):
\begin{align*}
  d^{\mathrm{ctxt}}_\varepsilon(E) & = \{(E, \mathrm{ctxt})\}, & d^{\mathrm{ctxt}}_{a\cdot v}(E) & = \bigcup_{(E', \mathrm{ctxt}')\in d^{\mathrm{ctxt}}_a(E)} d^{\mathrm{ctxt}'}_v(E').
\end{align*}

\begin{example}\label{ex calc der capt group word}
  Let us consider the three expressions of Example~\ref{ex calc der capt group}:
  \begin{align*}
    E & = E_1 \cdot E_2, & E_1 & = {({{(a^*)}_x}bx)}^*, & E_2 & = cx.
  \end{align*}
  Then, for any context \(\mathrm{ctxt}\),
  \begin{align*}
    \mathrm{d}^{\mathrm{ctxt}}_{ab}(E)    & = \mathrm{d}^{\mathrm{ctxt}}_{b} ({(a^*)}_x^{a}bx {({(a^*)}_{x}bx)}^*cx) \\
                                          & = \{ (x {({(a^*)}_{x}bx)}^*cx, \lambda x \rightarrow a)  \}              \\
    \mathrm{d}^{\mathrm{ctxt}}_{aba}(E)   & = \mathrm{d}^{\lambda x \rightarrow a}_{a} (x {({(a^*)}_{x}bx)}^*cx)     \\
                                          & = \{ ({({(a^*)}_{x}bx)}^*cx, \lambda x \rightarrow a)  \}                \\
    \mathrm{d}^{\mathrm{ctxt}}_{abac}(E)  & = \mathrm{d}^{\lambda x \rightarrow a}_{c} ({({(a^*)}_{x}bx)}^*cx)       \\
                                          & = \{ (x, \lambda x \rightarrow a)  \}                                    \\
    \mathrm{d}^{\mathrm{ctxt}}_{abaca}(E) & = \mathrm{d}^{\lambda x \rightarrow a}_{a} (x)                           \\
                                          & = \{ (\varepsilon, \lambda x \rightarrow a)  \}
  \end{align*}
\end{example}

Such an operation allows us to syntactically compute the quotient.
\begin{proposition}\label{prop quot capt group}
  Let \(E\) be a \(\Sigma,\Gamma \)-expression, \(\mathrm{ctxt}\) be a context in \(\mathrm{Ctxt}(\Gamma, \Sigma)\)
  and \(w\) be a word in \(\Sigma^* \).
  Then:
  \begin{equation*}
    \bigcup_{(E', \mathrm{ctxt}') \in d^{\mathrm{ctxt}}_w(E)} \mathsf{L}^{\mathrm{ctxt}'}(E')
    =
    \bigcup_{(L', \mathrm{ctxt}') \in \mathsf{L}^{\mathrm{ctxt}}(E)} \{(w^{-1}(L'), \mathrm{ctxt}')\}
  \end{equation*}
\end{proposition}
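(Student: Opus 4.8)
The plan is to prove the statement by induction on the length of $w$, using the single-symbol version established in the immediately preceding proposition together with the elementary identity $(a\cdot v)^{-1}(L) = v^{-1}(a^{-1}(L))$ for quotients of languages. As in that proof, I work modulo the convention identifying $\emptyset$ with $\{(\emptyset, \mathrm{ctxt})\}$, so that empty derivatives cause no trouble.

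For the base case $w = \varepsilon$, unfolding $d^{\mathrm{ctxt}}_\varepsilon(E) = \{(E, \mathrm{ctxt})\}$ shows that the left-hand side is just $\mathsf{L}^{\mathrm{ctxt}}(E)$; since $\varepsilon^{-1}(L') = L'$, the right-hand side is $\bigcup_{(L', \mathrm{ctxt}') \in \mathsf{L}^{\mathrm{ctxt}}(E)} \{(L', \mathrm{ctxt}')\} = \mathsf{L}^{\mathrm{ctxt}}(E)$ as well, so the two sides agree.

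For the inductive step, write $w = a\cdot v$ with $a \in \Sigma$ and apply the recursive definition $d^{\mathrm{ctxt}}_{a\cdot v}(E) = \bigcup_{(E', \mathrm{ctxt}')\in d^{\mathrm{ctxt}}_a(E)} d^{\mathrm{ctxt}'}_v(E')$. The left-hand side then becomes the double union $\bigcup_{(E', \mathrm{ctxt}') \in d^{\mathrm{ctxt}}_a(E)} \bigl(\bigcup_{(E'', \mathrm{ctxt}'') \in d^{\mathrm{ctxt}'}_v(E')} \mathsf{L}^{\mathrm{ctxt}''}(E'')\bigr)$. To the inner union I apply the induction hypothesis (for the word $v$, the expression $E'$ and the context $\mathrm{ctxt}'$), rewriting it as $\bigcup_{(L'', \mathrm{ctxt}'') \in \mathsf{L}^{\mathrm{ctxt}'}(E')} \{(v^{-1}(L''), \mathrm{ctxt}'')\}$. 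Because the body of this union no longer mentions the outer index $(E', \mathrm{ctxt}')$ and depends only on $(L'', \mathrm{ctxt}'')$, idempotence of union lets me flatten the nested unions into a single union indexed by $(L'', \mathrm{ctxt}'') \in \bigcup_{(E', \mathrm{ctxt}') \in d^{\mathrm{ctxt}}_a(E)} \mathsf{L}^{\mathrm{ctxt}'}(E')$. The preceding (single-symbol) proposition identifies exactly this index set with $\bigcup_{(L', \mathrm{ctxt}') \in \mathsf{L}^{\mathrm{ctxt}}(E)} \{(a^{-1}(L'), \mathrm{ctxt}')\}$; substituting, flattening once more, and using $v^{-1}(a^{-1}(L')) = (a\cdot v)^{-1}(L')$ yields precisely $\bigcup_{(L', \mathrm{ctxt}') \in \mathsf{L}^{\mathrm{ctxt}}(E)} \{((a\cdot v)^{-1}(L'), \mathrm{ctxt}')\}$, the right-hand side.

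The routine content is the reassociation of unions; the only point that deserves explicit mention, rather than a silent reindexing, is the flattening step, which is legitimate precisely because union is idempotent and the inner summand is independent of the outer index. An alternative, more pedestrian route is a direct membership chase comparing the two sides of couples $(w', \mathrm{ctxt}')$; I expect the union-calculus version above to be shorter and to parallel the structure of the preceding proof, so I would present that one.
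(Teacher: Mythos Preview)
Your proposal is correct and follows exactly the route the paper takes: the paper's proof is the single line ``By a direct induction over the structure of words,'' and your argument is precisely that induction spelled out, invoking the preceding single-symbol proposition and the identity \((a\cdot v)^{-1}(L)=v^{-1}(a^{-1}(L))\).
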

\begin{proof}
  By a direct induction over the structure of words.
\end{proof}

Finally, the membership test of a word \(w\) can be performed as usual by first computing the derivation
w.r.t. \(w\), and then by determining the existence of a nullable derivative, as a direct corollary of Proposition~\ref{prop null capt group} and Proposition~\ref{prop quot capt group}.
\begin{theorem}
  Let \(E\) be a \(\Sigma,\Gamma \)-expression, \(\mathrm{ctxt}\) be a context in \(\mathrm{Ctxt}(\Gamma, \Sigma)\)
  and \(w\) be a word in \(\Sigma^* \).
  Then the two following conditions are equivalent:
  \begin{itemize}
    \item \(\exists (L, \_) \in \mathsf{L}^{\mathrm{ctxt}}(E) \mid w \in L\),
    \item \( \exists (E', \mathrm{ctxt}') \in d^{\mathrm{ctxt}}_w(E) \mid \mathrm{Null}^{\mathrm{ctxt}'}(E') \neq \emptyset\).
  \end{itemize}
\end{theorem}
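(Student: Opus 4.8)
The plan is to derive the statement as a straightforward corollary of Proposition~\ref{prop quot capt group} and Proposition~\ref{prop null capt group}, glued together by the elementary fact that a word $w$ belongs to a language $L$ if and only if the empty word belongs to the quotient $w^{-1}(L)$. No new induction is needed here: all of the structural work already lives in the two cited propositions.

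First I would reformulate the first condition using the quotient. The assertion $\exists (L, \_) \in \mathsf{L}^{\mathrm{ctxt}}(E)$ with $w \in L$ is equivalent to $\exists (L', \mathrm{ctxt}') \in \mathsf{L}^{\mathrm{ctxt}}(E)$ with $\varepsilon \in w^{-1}(L')$, hence to the existence of a couple whose first component contains $\varepsilon$ in the set $\bigcup_{(L', \mathrm{ctxt}') \in \mathsf{L}^{\mathrm{ctxt}}(E)} \{(w^{-1}(L'), \mathrm{ctxt}')\}$. Then I would invoke Proposition~\ref{prop quot capt group}, which identifies this set with $\bigcup_{(E', \mathrm{ctxt}') \in d^{\mathrm{ctxt}}_w(E)} \mathsf{L}^{\mathrm{ctxt}'}(E')$. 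Unfolding the union, the first condition therefore amounts to: there exist a couple $(E', \mathrm{ctxt}') \in d^{\mathrm{ctxt}}_w(E)$ and a couple $(L'', \_) \in \mathsf{L}^{\mathrm{ctxt}'}(E')$ with $\varepsilon \in L''$.

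Finally, for fixed $E'$ and $\mathrm{ctxt}'$, Proposition~\ref{prop null capt group} states that $\mathrm{Null}^{\mathrm{ctxt}'}(E') \neq \emptyset$ is equivalent to the existence of a couple $(L'', \_) \in \mathsf{L}^{\mathrm{ctxt}'}(E')$ with $\varepsilon \in L''$. Taking the disjunction of this equivalence over all $(E', \mathrm{ctxt}') \in d^{\mathrm{ctxt}}_w(E)$ yields exactly the claimed equivalence between the two conditions of the theorem.

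The only point deserving care is the bookkeeping of the quantifiers: one must check that ``some couple of the big union has $\varepsilon$ in its first component'' correctly unfolds to ``for some index of the union, some couple of the corresponding piece has $\varepsilon$ in its first component'', and one should keep the convention, already used in Proposition~\ref{prop quot capt group}, of identifying $\emptyset$ with $\{(\emptyset, \mathrm{ctxt})\}$ so that the empty pieces of the union contribute nothing. I expect this to be entirely routine; the main (very minor) obstacle is presenting the quantifier manipulation cleanly rather than any genuine mathematical difficulty.
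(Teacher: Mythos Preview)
Your proposal is correct and matches the paper's own approach: the theorem is stated there as a direct corollary of Proposition~\ref{prop null capt group} and Proposition~\ref{prop quot capt group}, precisely via the chain ``$w\in L \Leftrightarrow \varepsilon\in w^{-1}(L)$, then apply the quotient proposition, then apply the nullability proposition'' that you describe. The quantifier bookkeeping you flag is indeed routine, and the identification of $\emptyset$ with $\{(\emptyset,\mathrm{ctxt})\}$ is exactly the convention the paper already uses in the proof of Proposition~\ref{prop quot capt group}.
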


We have shown how to compute the derivatives and solve the membership test in a classical way.
Let us show how to embed the context computation
in a convenient monad, in order to generalize the definitions to other structure than sets.

\subsection{The StateT Monad Transformer}

Monads do not compose well in general.
However, ones can consider particular combinations of these objects.
Among those, well-known patterns are the monad transformers like the StateT Monad Transformer~\cite{MPJ95}.
This combination allows us to mimick the use of global variables
in a functional way.
In our setting, it allows us to embed the context computation in an elegant way.

Let \(S\) be a set and \(M\) be a monad.
We denote by \(\mathtt{StateT}(S, M)\) following the mapping:
\begin{equation*}
  \mathtt{StateT}(S, M)(A) = S \rightarrow M (A \times S).
\end{equation*}
In other terms, \(\mathtt{StateT}(S, M)(A)\) is the set of functions
from \(S\) to the monadic structure \(M (A \times S)\) based on couples
in the cartesian product \((A \times S)\).

The mapping \(\mathtt{StateT}(S, M)\) can be equipped
by a structure of functor,
defined for any function \(f\)  from a set \(A\) to a set \(B\) by
\begin{equation*}
  \mathtt{StateT}(S, M)(f)(\mathrm{state})(s) =
  M(\lambda (a, s) \rightarrow (f(a), s))(\mathrm{state}(s)).
\end{equation*}
It can also be equipped with the structure of monad,
defined for any function \(f\) from a set \(A\) to the set \(\mathtt{StateT}(S, M)(B)\):
\begin{align*}
  \mathtt{pure}(a)                    & = \lambda s \rightarrow \mathtt{pure}(a, s)                    \\
  \mathtt{bind}(f)(\mathrm{state})(s) & = \mathrm{state}(s) \bind \lambda (a, s') \rightarrow f(a)(s')
\end{align*}

% We denote by \( \bindn \) the operator defined by
% \begin{equation*}
%   \mathrm{state} \bindn \mathrm{state}' = \mathrm{state} \bind \_ \rightarrow \mathrm{state}'.
% \end{equation*}

% To shorten equations, the following functions are defined, where \(f\) is a function in \(S \rightarrow S\),
% \(g\) a function in  \(M(A,S) \times M(A,S) \rightarrow M(A,S) \),
% and \(h\) a function from \( M(a, s)\) to \( N(b, s)\):
% \begin{align*}
%   \mathtt{modify}(f) & = \lambda s \rightarrow \mathtt{pure}(\top, f(s)) \\
%   % \mathtt{liftA2StateT}(g, {\mathrm{state}}_1, {\mathrm{state}}_2) & =
%   % \lambda s \rightarrow g({\mathrm{state}}_1(s), {\mathrm{state}}_2(s))                                                \\
%    % \mathtt{mapStateT}(h)(\mathrm{state}) & = \lambda s \rightarrow h(\mathrm{state}(s))
% \end{align*}

\subsection{Monadic Definitions}

The previous definitions associated with capture-group expressions can be equivalently restated
using the StateT monad transformer specialised with the \(\mathtt{Set}\) monad.

Let us first consider the following claims where \( M = \mathtt{StateT}(\mathrm{Ctxt}(\Gamma, \Sigma), \mathtt{Set})\),
allowing us to bring closer \(M\) and the previous notion of monadic support:
\begin{itemize}
  \item \(\mathbb{R} = (M(\mathbbm{1}), +, \times, 1, 0)\) is a semiring by setting:
        \begin{align*}
          f_1 + f_2      & = \lambda s \rightarrow f_1(s) \cup f_2(s),                  &
          f_1 \times f_2 & = f_1 \bind \lambda \_ \rightarrow f_2,                        \\
          1              & = \lambda s \rightarrow \{(\top, s)\} = \mathrm{pure}(\top), &
          0              & = \lambda s \rightarrow \emptyset,
        \end{align*}
  \item \(\mathbb{M} = (M(\mathrm{Exp}(\Sigma)), \pm, \underline{0}) \) is a monoid by setting:
        \begin{align*}
          \pm & = +, & \underline{0} & = 0,
        \end{align*}
  \item \((\mathbb{M}, \ltimes)\) is a \(\mathrm{Exp}(\Sigma)\)-right-semimodule by setting:
        \begin{equation*}
          f \ltimes F = \lambda s \rightarrow \bigcup_{(E, \mathrm{ctxt}) \in f(s)} \{(E \cdot F, \mathrm{ctxt})\},
        \end{equation*}
  \item \((\mathbb{M}, \triangleright)\) is a \(\mathbb{R}\)-left-semimodule by setting:
        \begin{equation*}
          f_1 \triangleright f_2 = f_1 \bind \lambda \_ \rightarrow f_2.
        \end{equation*}
\end{itemize}
Then, the nullable predicate formulae can be equivalently restated as an element in
\(\mathtt{StateT}(\mathrm{Ctxt}(\Gamma, \Sigma), \mathtt{Set})(\mathbbm{1}) \), which is equal by definition
to \( \mathrm{Ctxt}(\Gamma, \Sigma) \rightarrow \mathtt{Set}(\mathbbm{1} \times \mathrm{Ctxt}(\Gamma, \Sigma)) \),
isomorphic to \( \mathrm{Ctxt}(\Gamma, \Sigma) \rightarrow \mathtt{Set}(\mathrm{Ctxt}(\Gamma, \Sigma)) \).
It can inductively be computed as follows:
\begin{align*}
  \mathrm{Null}(\varepsilon) & = 1                                        &
  \mathrm{Null}(\emptyset)   & = 0                                        &
  \mathrm{Null}(a)           & = 0                                          \\
  \mathrm{Null}(E+F)         & = \mathrm{Null}(E) + \mathrm{Null}(F)      &
  \mathrm{Null}(E\cdot F)    & = \mathrm{Null}(E) \times \mathrm{Null}(F) &
  \mathrm{Null}(E^*)         & = 1
\end{align*}

\begin{align*}
  \mathrm{Null}(x)(\mathrm{ctxt})         & =
  \begin{cases}
    \mathrm{pure}((\top, \mathrm{ctxt})) & \text{ if } \mathrm{ctxt}(x) = \mathtt{Just}(\varepsilon), \\
    \emptyset                            & \text{ otherwise,}
  \end{cases}                                                                                                                                                          \\
  \mathrm{Null}({(E)}_x^u)(\mathrm{ctxt}) & =  \mathrm{Set}(\lambda (\top, \mathrm{ctxt}') \rightarrow (\top, {[\mathrm{ctxt}']}_{x \leftarrow u}))(\mathrm{Null}(F)(\mathrm{ctxt})),
\end{align*}
where \(E\) and \(F\) are two \(\Sigma,\Gamma \)-expressions, \(a\) is a symbol in \(\Sigma \),
\(x\) is a variable in \(\Gamma \) and \(u\) is in \(\Sigma^*\).
Notice that these formulae are the same that the ones in Definition~\ref{def Null} as far as classical operators are concerned,
and that these formulae can be easily generalized to other convenient monads than \(\mathrm{Set}\).

Moreover, the derivative of an expression is an element in \(\mathtt{StateT}(\mathrm{Ctxt}(\Gamma, \Sigma), \mathrm{Set})(\mathrm{Exp}(\Sigma,\Gamma))\):
\begin{align*}
  \mathrm{d}_a(\varepsilon) & = \underline{0}                                                               &
  \mathrm{d}_a(\emptyset)   & = \underline{0}                                                               &
  \mathrm{d}_a(b)           & =
  \begin{cases}
    \underline{0}              & \text {if } a \neq b, \\
    \mathtt{pure}(\varepsilon) & \text{ otherwise,}
  \end{cases}                                                                                  \\
  \mathrm{d}_a(E+F)         & = \mathrm{d}_a(E) \pm \mathrm{d}_a(F)                                         &
  \mathrm{d}_a(E\cdot F)    & = \mathrm{d}_a(E) \ltimes F + \mathrm{Null}(E) \triangleright \mathrm{d}_a(F) &
  \mathrm{d}_a(E^*)         & = \mathrm{d}_a(E) \ltimes E^*
\end{align*}

\begin{align*}
  \mathrm{d}_a({(E)}_x^u)        & =\mathtt{StateT}(\mathrm{Ctxt}(\Gamma, \Sigma), \mathrm{Set})(\lambda F \rightarrow {(F)}_x^{ua}) (\mathrm{d}_a(E))
  \\
  \mathrm{d}_a(x)(\mathrm{ctxt}) & =
  \begin{cases}
    \mathrm{pure}((w, \mathrm{ctxt})) & \text{ if } \mathrm{ctxt}(x) = \mathtt{Just}(aw), \\
    \emptyset                         & \text{ otherwise, }
  \end{cases}
\end{align*}
where \(E\) and \(F\) are two \(\Sigma,\Gamma \)-expressions, \(a\) is a symbol in \(\Sigma \),
\(x\) is a variable in \(\Gamma \) and \(u\) is in \(\Sigma^*\).
Once again, notice that these formulae are the same that the ones in Definition~\ref{def der symb}  as far as classical operators are concerned,
and that these formulae can be easily generalized to other convenient monads than \(\mathrm{Set}\).

Finally, the derivation w.r.t.\ a word is monadically defined as in previous sections:
\begin{align*}
  d_\varepsilon(E) & = \mathtt{pure}(E), & d_{av}(E) & = d_a(E) \bind d_v,
\end{align*}
and the membership test of a word \(w\) can be equivalently rewritten as follows:
\begin{equation*}
  (d_w(E)  \bind \mathrm{Null})(\lambda \_ \rightarrow \mathrm{Nothing}) \neq \emptyset.
\end{equation*}

\section{Conclusion and Perspectives}

In this paper, we achieved the first step of our plan to unify the derivative computation
over word expressions.
Monads are indeed useful tools to abstract the underlying computation structures
and thus may allow us to consider some other functionalities, such as
capture groups \emph{via} the well-known StateT monad transformer~\cite{MPJ95}.
We aim to study the conditions satisfying by monads that lead to finite set of derivated terms, and
to extend this method to tree expressions using enriched categories.
Finally, we plan to extend monadic derivation to other underlying monads for capture groups, linear combinations for example.

\bibliographystyle{splncs_srt}
\bibliography{biblio}

\end{document}